\definecolor{steelblue}{RGB}{70,130,180}
\newtheorem{example}{Example}
\newtheorem{assumption}{Assumption}
\newtheorem{remark}{Remark}
\newtheorem{theorem}{Theorem}
\newtheorem{corollary}{Corollary}
\newtheorem{lemma}{Lemma}
\DeclareMathOperator{\vect}{vec}
\DeclareMathOperator{\Let}{: =}
\title{\LARGE \bf
Linear System Identification Under Multiplicative Noise \\from Multiple Trajectory Data
}
\author{Yu Xing$^{1}$, Ben Gravell$^{2}$, Xingkang He$^{3}$, Karl Henrik Johansson$^{3}$, and Tyler Summers$^{2}$% <-this % stops a space
%\thanks{$^{1}$Albert Author is with Faculty of Electrical Engineering, Mathematics and Computer Science,
%        University of Twente, 7500 AE Enschede, The Netherlands
%        {\tt\small albert.author@papercept.net}}%
%\thanks{$^{2}$Bernard D. Researcheris with the Department of Electrical Engineering, Wright State University,
%        Dayton, OH 45435, USA
%        {\tt\small b.d.researcher@ieee.org}}%
\thanks{$^{1}$Y. Xing is with Key Lab of Systems and Control, Academy of Mathematics and Systems Science, Chinese Academy of Sciences, and School of Mathematical Sciences, University of Chinese Academy of Sciences, Beijing, P. R. China, {\tt\small yxing@amss.ac.cn}. His work was supported by National Key R\&D Program of China (2016YFB0901900) and National Natural Science Foundation of China (61573345).}
\thanks{$^{2}$B. Gravell and T. Summers are with Department of Mechanical
Engineering, The University of Texas at Dallas, Richardson, TX, USA, {\tt\small \{Benjamin.Gravell,Tyler.Summers\}@utdallas.edu}. Their work was supported by the Air Force Office of Scientific Research under award number FA2386-19-1-4073.}
\thanks{$^{3}$X. He and K. H. Johansson are with Division of Decision and Control Systems, School of Electrical Engineering and Computer Science, KTH Royal Institute of Technology, Stockholm, Sweden, {\tt\small \{xingkang,  kallej\}@kth.se}. Their work was supported by Knut \& Alice Wallenberg Foundation, and Swedish Research Council.}
}
\begin{document}

\maketitle
\thispagestyle{empty}
\pagestyle{empty}

%%%%%%%%%%%%%%%%%%%%%%%%%%%%%%%%%%%%%%%%%%%%%%%%%%%%%%%%%%%%%%%%%%%%%%%%%%%%%%%%
\begin{abstract}
The study of multiplicative noise models has a long history in control theory but is re-emerging in the context of complex networked systems and systems with learning-based control. We consider linear system identification with multiplicative noise from multiple state-input trajectory data. We propose exploratory input signals along with a least-squares algorithm to simultaneously estimate nominal system parameters and multiplicative noise covariance matrices.
Identifiability of the covariance structure and asymptotic consistency of the least-squares estimator are demonstrated by analyzing first and second moment dynamics of the system. The results are illustrated by numerical simulations.
\end{abstract}

%%%%%%%%%%%%%%%%%%%%%%%%%%%%%%%%%%%%%%%%%%%%%%%%%%%%%%%%%%%%%%%%%%%%%%%%%%%%%%%%
\section{Introduction}

The study of stochastic systems with noise which multiplies with the state and input i.e. multiplicative noise has a long history in control theory \cite{wonham1967optimal}, %\cite{kushner1967stochastic,kozin1969survey}
but is re-emerging in the context of complex networked systems and systems with learning-based control. 
In contrast with the well-known additive noise setting, multiplicative noise has the ability to capture dependence of the noise on the state and/or control input. This situation occurs in modern control systems as diverse as robotics with distance-dependent sensor errors \cite{dutoit2011robot}, networked systems with noisy communication channels \cite{antsaklis2007special,hespanha2007survey}, modern power networks with high penetration of intermittent renewables \cite{guo2019a}, %\cite{carrasco2006power,milano2018foundations}
turbulent fluid flow \cite{lumley2007stochastic}, and neuronal brain networks \cite{breakspear2017dynamic}. Linear systems with multiplicative noise are particularly attractive as a stochastic modeling framework because they remain simple enough to admit closed-form expressions for stability and stabilization via generalized Lyapunov equations (e.g. \cite{boyd1994linear}), optimal control via the solution of generalized Riccati equations \cite{wonham1967optimal,kleinman1969optimal} and state estimation. Additionally, recent results show that the optimal control of this class of systems can be learned strictly from sample data without constructing a model via the reinforcement learning technique of policy gradient \cite{gravell2019learning}. As a complementary perspective, here we tackle the problem from a model-based perspective where the goal is to learn and construct a model from sample data, which can then be used e.g. for optimal control design. 

The first issue that must be addressed is that a complete multiplicative noise system model requires accurate estimates not only of the nominal linear system matrices, but also the noise covariance structure. This stands in stark contrast to the additive noise case where the noise covariance structure has no bearing on the control design and can thus be ignored during system identification.
For the identification of a nominal linear system, recursive algorithms have been developed in the control literature, such as the recursive least-squares algorithm \cite{chen2012identification}. These can be utilized for linear systems with multiplicative noise provided that certain assumptions on the noise and on system stability hold.
For the estimation of noise covariances, both recursive and batch estimation methods have been proposed over the last few decades (see \cite{dunik2017noise} for a review), but these focus nearly exclusively on additive noise. 
In order to estimate multiplicative noise covariances, the maximum-likelihood approach was introduced in \cite{schon2011system}, % shumway2017time}, 
and the Bayesian framework was utilized in \cite{kitagawa1998self}, for example assuming Gaussian or known distributions with unknown parameters. These methods, however, require prior assumptions on the noise distributions whose incorrectness may worsen the performance of the concerned algorithms for optimal control.
Our paper concentrates on jointly estimating the nominal system parameters and the multiplicative noise covariances without imposing any prior assumptions on the distribution of the noises, other than being independent and identically distributed (i.i.d.) with finite first and second moments, which complicates the problem. Both state- and control-dependent noise in the system leads to coupling, which also makes the identification task more difficult. 

The second issue we address is that of performing system identification based on multiple state-input trajectory data rather than a single trajectory.
Multiple trajectory data arises in two broad situations: 1) episodic tasks where a single system is reset to an initial state after a finite run time, as encountered in iterative learning control and reinforcement learning problems \cite{matni2019from} and 2) collecting data from multiple identical systems in parallel, for example, physical experiments \cite{gu2017deep} %, ljung2010perspectives} 
and snapshots of social interaction processes \cite{schaub2019blind}.
For multiple trajectory data the duration of each trajectory sample may be small, but a large sample size can be obtained by virtue of repetition in the case of episodic tasks and parallel execution in the case of multiple identical systems.
Thus, there is a growing interest in system identification based on multiple trajectory data, along with their applications in machine learning literature \cite{dean2017sample,matni2019tutorial}.
%, while a single-trajectory data has been more frequently considered in traditional problems like adaptive control, where time goes to infinity for only one sample path \cite{matni2019from}. 

In this paper we consider linear system identification with multiplicative noise from multiple trajectory data. Our contributions are two-fold:
\begin{itemize}
\item We propose a least-squares estimation algorithm to jointly estimate the nominal system matrices and multiplicative noise covariances from sample averages of multiple finite-horizon trajectory rollouts (Algorithm 1). 
A two-stage algorithm based on first and second moment dynamics that separate the nominal parameters from the noise variances is utilized, where a stochastic input design, from Gaussian and Wishart distributions, is used for exciting the moment dynamics. 
The algorithm does not need prior knowledge for the multiplicative noise or stability conditions for the system, except that the noises are i.i.d. among different trajectories, with finite first and second moments so it may be applied to a wide range of scenarios.
\item Identifiability of the noise covariance matrices and asymptotic consistency of our proposed algorithm are demonstrated.
First, it is shown that there exists an equivalent class of covariance structure that generates the same second moment dynamics. Then, it is verified that dynamics defined by the first and second moments of states can generate a well-defined closed-form expression of the parameters, provided sufficiently exciting input sequences and certain controllability conditions hold. 
Then by assuming the multiple trajectory data are i.i.d., the consistency of the estimator, i.e., convergence to the true value as the number of trajectory samples grows to infinity, is obtained by combining the former result and the law of large numbers.
\end{itemize}

The remainder of the paper is organized as follows:
we formulate the problem in Section \ref{problemFormulation},
then in Section \ref{parameterEstimation} the algorithm is introduced and theoretical results are given,
numerical simulation results are presented in Section \ref{numericalSimulations}, 
and in Section \ref{conclusions} we conclude.

\noindent\textbf{\emph{Notation.}}
We denote the $n$-dimensional Euclidean space by $\mathds{R}^n$, and the set of $n\times m$ real matrices by $\mathds{R}^{n\times m}$. 
We use $\|\cdot\|$ to denote the Euclidean norm for vectors and the Frobenius norm for matrices.
The expectation of a random vector $X$ is represented by $\mathds{E} \{X\}$. Denote the $n$-dimensional identity matrix by $I_n \in \mathds{R}^{n\times n}$.
The Kronecker product of two matrices $A \in \mathds{R}^{m\times n}$ and $B \in \mathds{R}^{p\times q}$ is represented by $A \otimes B$,
and the vectorization of $A$ by ${\vect(A) = (a_{11}~a_{21}~\cdots~a_{m1}~a_{12}~a_{22}~\cdots~a_{mn})^\intercal}$. 
For a block matrix 
\begin{align*}
    B =
    \begin{bmatrix}
    B_{11} & B_{12} & \cdots & B_{1n}\\
    \vdots & \vdots &        & \vdots\\
    B_{m1} & B_{m2} & \cdots & B_{mn}
    \end{bmatrix}
    \in \mathds{R}^{mp \times nq},
\end{align*}
where $B_{ij} \in \mathds{R}^{p\times q}$, we define the following matrix reshaping operator $ F: \mathds{R}^{mp \times nq} \rightarrow \mathds{R}^{mn \times pq}$:
\begin{align*}
    F(B, m, n, p, q) := 
    &[\vect(B_{11}) ~ \vect(B_{21}) ~ \cdots ~ \vect(B_{m1}) ~ \cdots \\
    & \vect(B_{12}) ~ \vect(B_{22}) ~ \cdots ~ \vect(B_{mn})]^\intercal . 
\end{align*}
Then we have that $F(A \otimes A, m, n, m, n) = \vect(A) \vect(A)^\intercal$ for $A \in \mathds{R}^{m \times n}$, which demonstrates the relation between the entries of $A \otimes A$ and those of $\vect(A) \vect(A)^\intercal$. Note when $p=q=1$, $F(\cdot)$ degenerates to $\vect(\cdot)$. 
One can also verify the below reshaping operator from $\vect(A) \vect(A)^\intercal$ to $A \otimes A$. That is, $G: \mathds{R}^{mn \times pq} \to \mathds{R}^{mp \times nq}$:
\begin{align*}
    &G(B, m, n, p, q) := \\
    &\begin{bmatrix}
    \vect_{p\times q}^{-1}(B_1) & \cdots & \vect_{p\times q}^{-1}(B_{(n-1)m+1}) \\
    \vect_{p\times q}^{-1}(B_2) & \cdots & \vect_{p\times q}^{-1}(B_{(n-1)m+2}) \\
    \vdots & & \vdots \\
    \vect_{p\times q}^{-1}(B_m) & \cdots & \vect_{p\times q}^{-1}(B_{mn})
    \end{bmatrix},
\end{align*}
where $B = [B_1^\intercal ~ \cdots B_{mn}^\intercal]^\intercal$ and $\vect_{p\times q}^{-1}(x) = (\vect(I_q)^\intercal \otimes I_p)(I_q \otimes x)$ for $x \in \mathds{R}^{pq}$.

\section{Problem Formulation}\label{problemFormulation}
We consider linear systems with multiplicative noise
\begin{equation}
\begin{aligned}\label{theSystem}
 x_{t+1} &= (A  + \bar{A}_t) x_t +  (B + \bar{B}_t) u_t 
\end{aligned}
\end{equation}
where  $x_t \in \mathds{R}^n$ is the system state and $u_t \in \mathds{R}^m$ is the control input to be designed. 
The dynamics are described by a nominal dynamics matrix $A \in \mathds{R}^{n \times n}$ and nominal input matrix $B \in \mathds{}{R}^{n \times m}$ and incorporate multiplicative noise terms modeled by the i.i.d. and mutually independent random matrices $\bar{A}_t$ and $\bar{B}_t$ which have zero mean and covariance matrices $\Sigma_A := \mathds{E} \{\vect(\bar{A}_t)\vect(\bar{A}_t)^T\} \in \mathds{R}^{n^2 \times n^2}$ and  $\Sigma_B := \mathds{E} \{\vect(\bar{B}_t)\vect(\bar{B}_t)^T\} \in \mathds{R}^{nm \times nm}$. Note that if $\bar{A}_t$ has non-zero mean $\bar{A}$, then we can consider a system with nominal matrix $(A+\bar{A}, B)$, as well as noise terms $\bar{A}_t - \bar{A}$ and $\bar{B}_t$, which satisfies the above zero-mean assumption. This also holds for cases with $\bar{B}_t$ non-zero mean. The term multiplicative noise refers to the fact that noises $\bar{A}_t$ and $\bar{B}_t$ enter the system as multipliers of $x_t$ and $u_t$, rather than as additions. In the latter case, the noises are called additive ones, resulting in much simpler system dynamics.

As an example of system \eqref{theSystem}, consider the following system studied in the optimal control literature \cite{boyd1994linear,gravell2019learning}.
\begin{equation} \label{eq:system_eigen_noises}
\begin{aligned} 
 x_{t+1} &= (A  + \sum_{i = 1}^r A_i p_{i,t}) x_t + (B + \sum_{j = 1}^s B_j q_{j, t}) u_t,
\end{aligned}
\end{equation}
where $\{p_{i,t}\}$ and $\{q_{i,t}\}$ are mutually independent i.i.d. scalar random variables, with $\mathds{E}\{ p_{i,t}\} = \mathds{E} \{q_{j,t}\} = 0$, $\mathds{E} \{p_{i,t}^2\} = \sigma^2_i$, and $\mathds{E}\{ q_{j,t}^2\} = \delta^2_j$, $\forall i \in [1,r], j \in [1,s], t \ge 0$. It can be seen that $\bar{A}_t = \sum_{i = 1}^r A_i p_{i,t}$ and $\bar{B}_t = \sum_{j = 1}^s B_j q_{j, t}$ where $\sigma_i$ and $\delta_j$ are the eigenvalues of $\Sigma_A$ and $\Sigma_B$, and $A_i$ and $B_j$ are the reshaped eigenvectors of $\Sigma_A$ and $\Sigma_B$. These parameters are necessary for optimal controller design, as \cite{gravell2019learning} showed. However, for new systems with unknown parameters, the key problem is to identify them in the first place, stated as follows. Another example of system \eqref{theSystem} is interconnected systems, where the nominal part captures relations among different subsystems, and multiplicative noises characterize randomly varying topologies \cite{haber2014subspace}.
%Moreover, $\{\bar{A}_t\}$ and $\{\bar{B}_t\}$ are independent, so the correlation matrix of $\vect(\bar{A}_t~\bar{B}_t)$, $\Sigma_{AB} := \mathds{E} \vect([\bar{A}_t \ \bar{B}_t]) \vect([\bar{A}_t \ \bar{B}_t])^\intercal \in \mathds{R}^{n(n+m) \times n(n+m)}$ has off-diagonal blocks of all zeros and can be split into the covariance matrices $\Sigma_A$ and  $\Sigma_B$.

\textbf{Problem.} Suppose that the system parameters $A, B, \Sigma_{A},$ and $\Sigma_B$ are unknown, but state-input trajectories are available for system identification.
Our goal in this paper is to estimate $A, B, \Sigma_{A},$ and $\Sigma_B$ based on multiple trajectory data $\{x_t^{(k)}, 0 \le t \le \ell, k \in \mathds{N}^+\}$, by appropriately designing the input sequence $\{u_t^{(k)}, 0 \le t \le \ell-1, k \in \mathds{N}^+\}$ and initial states $x_0^{(k)}$, where $\{ x_t^{(k)} , 0 \le t \le \ell \}$, is the $k$-th trajectory sample, and $\ell$ is the final time-step for every trajectory. 

%\section{System Identification via Least Squares with First and Second Moments}\label{parameterEstimation}

\section{Least-Squares Algorithm Based on Multiple Trajectory Data}\label{parameterEstimation}

\subsection{Algorithm Design} 
In this section, we propose our exploratory input sequence design and least-squares algorithm to estimate the system parameters from multiple trajectory data. 
We assume that the sampled trajectory data are collected independently, and refer to each trajectory sample as a 
\emph{rollout}.
Because every rollout is affected by the multiplicative noise, we will use least-squares on the first and second moment dynamics averaged over multiple trajectories to solve the system identification problem. Also, we assume inputs of arbitrary magnitude may be executed perfectly. 

Taking the expectation of both sides of \eqref{theSystem} we obtain the first-moment dynamics of states, i.e., the dynamics of $\mathds{E}\{x_t\}$,
\begin{align}\label{eq:expectationDynamics}
    \mu_{t+1} = A \mu_t + B \nu_t,
\end{align}
where $\mu_t := \mathds{E}\{x_t\}$ and $\nu_t := \mathds{E} \{u_t\}$. 

Likewise, denote the vectorization of the instantaneous second moment matrices of state, state-input, and input at time $t$ by $X_t := \vect(\mathds{E} \{x_t x_t^\intercal\})$, $W_t := \vect(\mathds{E} \{x_t u_t^\intercal\})$, $W_t' := \vect(\mathds{E} \{u_t x_t^\intercal\})$, and $U_t := \vect(\mathds{E} \{u_t u_t^\intercal\})$. Note that the second moment matrix we used here, namely $\mathds{E}\{ XY^\intercal \}$ for two random vectors $X$ and $Y$, is different from the covariance matrix, which is $\mathds{E}\{ (X-\mathds{E}\{X\}) (Y-\mathds{E}\{Y\})^\intercal \} = \mathds{E}\{ XY^\intercal \} - \mathds{E}\{X\}\mathds{E}\{Y\}^\intercal$.

From the independence of $\bar{A}_t$ and $\bar{B}_t$, as well as vectorization, the second moment dynamics of \eqref{theSystem} are
\begin{align}\label{eq:vecCorrelationDynamics}\nonumber
    X_{t+1} &= (A \otimes A) X_t + (B \otimes A) W_t + (A \otimes B) W_t' \\\nonumber
    &\qquad + (B \otimes B) U_t + \mathds{E} \left\{(\bar{A}_t \otimes \bar{A}_t) \vect(x_t x_t^\intercal) \right\} \\\nonumber
    &\qquad + \mathds{E} \left\{(\bar{B}_t \otimes \bar{B}_t) \vect(u_t u_t^\intercal) \right\}\\\nonumber
    %&= (A \otimes A) X_t + (B \otimes A) W_t + (A \otimes B) W_t^\intercal \\\nonumber
    %&~~ + (B \otimes B) U_t + \Sigma_A' X_t + \Sigma_B' U_t, \\\nonumber
    &= (A \otimes A + \Sigma_A') X_t + (B \otimes B + \Sigma_B') U_t  \\
    &\qquad + (B \otimes A) W_t + (A \otimes B) W_t'
\end{align}
where we denote $\Sigma_A' = \mathds{E}\{\bar{A}_t \otimes \bar{A}_t\} \in \mathds{R}^{n^2 \times n^2}$ and $\Sigma_B' = \mathds{E}\{\bar{B}_t \otimes \bar{B}_t\} \in \mathds{R}^{n^2 \times m^2}$. The relation between $(\Sigma_A, \Sigma_B)$ and $(\Sigma_A', \Sigma_B')$ can be illustrated by $F(\Sigma_A', n, n, n, n) = \Sigma_A$ and $F(\Sigma_B', n, m, n, m) = \Sigma_B$, where the reshaping operator $F(\cdot)$ is defined in the notation subsection.

Before giving an estimation algorithm, it is necessary to talk more about the second moment dynamics \eqref{eq:vecCorrelationDynamics}. Since $\mathds{E}\{x_tx_t^\intercal\}$ is symmetric, $X_t$ has $n(n-1)/2$ pairs of identical entries, corresponding to the off-diagonal entries of $\mathds{E}\{x_tx_t^\intercal\}$. For example, $\mathds{E}\{x_{t,i} x_{t,j}\}$ and $\mathds{E}\{x_{t,j}x_{t,i}\}$. Similarly, $U_t$ has $m(m-1)/2$ pairs of identical entries corresponding to the off-diagonal entries of $\mathds{E}\{u_tu_t^\intercal\}$. 

We apply the following process to \eqref{eq:vecCorrelationDynamics}, to obtain a simplified version of it.

\begin{itemize}
    \item[1.] If there exist $i$ and $j$ such that  $X_{t,i} = \mathds{E}\{x_{t,k}x_{t,l}\}$ and $X_{t,j} = \mathds{E}\{x_{t,l}x_{t,k}\}$, $1 \le i < j \le n^2$, for some $1 \le l < k \le n$, then add the $j$-th column of $A\otimes A + \Sigma_A'$ to its $i$-th column. After that, remove the $j$-th column.
    \item[2.] Remove the $j$-th entry of $X_{t+1}$ and $X_t$, and also the $j$-th row of $A\otimes A + \Sigma_A'$, $B\otimes B + \Sigma_B'$, $A\otimes B$, and $B\otimes A$.
    \item[3.] Return to step 1, until $n(n-1)/2$ entries of $X_t$ are removed (corresponding to the upper-diagonal entries of $\mathds{E}\{x_tx_t^\intercal\}$).
    \item[4.] If there exist $i$ and $j$ such that  $U_{t,i} = \mathds{E}\{u_{t,k}u_{t,l}\}$ and $U_{t,j} = \mathds{E}\{u_{t,l}u_{t,k}\}$, $1 \le i < j \le m^2$, for some $1 \le l < k \le m$, then add the $j$-th column of $B\otimes B + \Sigma_B'$ to its $i$-th column, remove the $j$-th column, and remove the $j$-th entry of $U_t$.
    \item[5.] Return to step 4, until $m(m-1)/2$ entries of $U_t$ are removed (corresponding to the upper-diagonal entries of $\mathds{E}\{u_tu_t^\intercal\}$).
\end{itemize}

In this way, we get a simplified version of \eqref{eq:vecCorrelationDynamics},
\begin{align}\label{eq:vecCorrelationDynamics_simp}\nonumber
    \tilde{X}_{t+1} &= (\tilde{A} + \tilde{\Sigma}_A') \tilde{X}_t + (\tilde{B} + \tilde{\Sigma}_B') \tilde{U}_t \\
    &\qquad + K_{BA} W_t + K_{AB} W_t',
\end{align}
where $\tilde{X}_t \in \mathds{R}^{n(n+1)/2}$ and $\tilde{U}_t \in \mathds{R}^{m(m+1)/2}$. 

The above procedure can be written in a compact form by considering a linear transformation from $\mathds{R}^{n^2}$ to $\mathds{R}^{n(n+1)/2}$, as given below. Let $I_{n^2} := [e_1~\cdots~e_{n^2}]$ be the $n^2$-dimensional identity matrix. Then define matrix $P_1 \in \mathds{R}^{[n(n+1)/2] \times n^2}$ by removing the $[(j-1)n+i]$-th row of $I_{n^2}$, define matrix $T_1 \in \mathds{R}^{n^2 \times n^2}$ by replacing the $[(j-1)n+i]$-th row of $I_{n^2}$ by $e_{(i-1)n+j}^\intercal$, and define matrix $Q_1 \in \mathds{R}^{n^2 \times [n(n+1)/2]}$ by removing the $[(j-1)n+i]$-th column of $T_1$, where $1 \le i < j \le n$. We can also define $P_2 \in \mathds{R}^{[m(m+1)/2] \times m^2}$, $T_2 \in \mathds{R}^{m^2 \times m^2}$, and $Q_2 \in \mathds{R}^{m^2 \times [m(m+1)/2]}$ by changing $n$ to $m$ in the above text.

It can be obtained that $\tilde{X}_t = P_1 X_t$, $\tilde{U}_t = P_2 U_t$, $X_t = T_1 X_t$, $U_t = T_2 U_t$, $T_1 = Q_1P_1$, and $T_2 = Q_2P_2$, by noticing that $\mathds{E}\{x_{t,i}x_{t,j}\}$ is the $[(j-1)n+i]$-th entry of $X_t$, $1\le i,j\le n$, and $\mathds{E}\{u_{t,i}u_{t,j}\}$ is the $[(j-1)m+i]$-th entry of $U_t$, $1\le i,j\le m$.

Hence, from \eqref{eq:vecCorrelationDynamics},
\begin{align*}
    \tilde{X}_{t+1} &= P_1 X_{t+1} \\
    &= P_1 (A \otimes A + \Sigma_A') X_t + P_1 (B \otimes B + \Sigma_B') U_t  \\
    &\qquad + P_1 (B \otimes A) W_t + P_1 (A \otimes B) W_t'\\
    &= P_1 (A \otimes A + \Sigma_A') T_1 X_t + P_1 (B \otimes B + \Sigma_B') T_2 U_t  \\
    &\qquad + P_1 (B \otimes A) W_t + P_1 (A \otimes B) W_t'\\
    &= P_1 (A \otimes A + \Sigma_A') Q_1 P_1 X_t \\
    &\qquad + P_1 (B \otimes B + \Sigma_B') Q_2 P_2 U_t  \\
    &\qquad\qquad  + P_1 (B \otimes A) W_t + P_1 (A \otimes B) W_t'\\
    &= (\tilde{A} + \tilde{\Sigma}_A') \tilde{X}_t + (\tilde{B} + \tilde{\Sigma}_B') \tilde{U}_t \\
    &\qquad + K_{BA} W_t + K_{AB} W_t'.
\end{align*}
As a result, $\tilde{A} = P_1 (A \otimes A) Q_1$, $\tilde{\Sigma}_A' = P_1 \Sigma_A' Q_1$, $\tilde{B} = P_1 (B \otimes B) Q_2$, $\tilde{\Sigma}_B' = P_1 \Sigma_B' Q_2$, $K_{BA} = P_1 (B \otimes A)$, and $K_{AB} = P_1 (A \otimes B)$.

Now with the following fact, we can restate the above relation between $(\Sigma_A',\Sigma_B')$ and $(\tilde{\Sigma}_A',\tilde{\Sigma}_B')$ from an entry-wise perspective in Theorem \ref{thm:identifiability}.

\begin{figure*}[ht]
\begin{align}\label{matrix:sigmaA}
\begin{array}{lc}
\mbox{}&
\begin{array}{ccccc} &(k-1)n+l& &(l-1)n+k \end{array}\\
\begin{array}{c} ~\\ (i-1)n+j \\ ~\\ (j-1)n+i\\~ \end{array}&
\left[\begin{array}{ccccc}
 & \vdots & & \vdots & \\
\cdots & \mathds{E}\{\bar{A}_{t,ik}\bar{A}_{t,jl}\} & \cdots & \mathds{E}\{\bar{A}_{t,il}\bar{A}_{t,jk}\} &\cdots\\
 & \vdots & & \vdots & \\
 \cdots & \mathds{E}\{\bar{A}_{t,jk}\bar{A}_{t,il}\} & \cdots & \mathds{E}\{\bar{A}_{t,jl}\bar{A}_{t,ik}\} & \cdots\\
  & \vdots & & \vdots & 
\end{array}\right]
\end{array}\\\label{matrix:sigmaB}
\begin{array}{lc}
\mbox{}&
\begin{array}{ccccc} &(k-1)m+l& &(l-1)m+k \end{array}\\
\begin{array}{c} ~\\ (i-1)n+j \\ ~\\ (j-1)n+i\\~ \end{array}&
\left[\begin{array}{ccccc}
 & \vdots & & \vdots & \\
\cdots & \mathds{E}\{\bar{B}_{t,ik}\bar{B}_{t,jl}\} & \cdots & \mathds{E}\{\bar{B}_{t,il}\bar{B}_{t,jk}\} &\cdots\\
 & \vdots & & \vdots & \\
 \cdots & \mathds{E}\{\bar{B}_{t,jk}\bar{B}_{t,il}\} & \cdots & \mathds{E}\{\bar{B}_{t,jl}\bar{B}_{t,ik}\} & \cdots\\
  & \vdots & & \vdots & 
\end{array}\right]
\end{array}
\end{align}
\end{figure*}

\begin{lemma} $\Sigma_A'$ and $\Sigma_B'$  has structure in \eqref{matrix:sigmaA} and \eqref{matrix:sigmaB} respectively, where $\bar{A}_{t,ij}$ is the $(i,j)$-th entry of $\bar{A}_t$, $1\le i\not=j \le n$, $1\le k\not=l \le n$, and $\bar{B}_{t,ij}$ is the $(i,j)$-th entry of $\bar{B}_t$, $1\le i\not=j \le n$, $1\le k\not=l \le m$.
\end{lemma}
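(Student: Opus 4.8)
The claim is entirely a matter of reading off the entries of the Kronecker products $\bar{A}_t\otimes\bar{A}_t$ and $\bar{B}_t\otimes\bar{B}_t$ and then applying linearity of expectation, so I expect no substantive difficulty. First I would recall the block definition of the Kronecker product: for $C\in\mathds{R}^{a\times b}$ and $D\in\mathds{R}^{c\times d}$, the matrix $C\otimes D$ is assembled from the blocks $c_{ij}D$, so that its entry in row $(i-1)c+k$ and column $(j-1)d+l$ is $c_{ij}d_{kl}$, for $1\le i\le a$, $1\le j\le b$, $1\le k\le c$, $1\le l\le d$.

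Specializing to $C=D=\bar{A}_t\in\mathds{R}^{n\times n}$, the entry of $\bar{A}_t\otimes\bar{A}_t$ in row $(i-1)n+j$ and column $(k-1)n+l$ equals $\bar{A}_{t,ik}\bar{A}_{t,jl}$, where the pair $(i,k)$ records the row/column index contributed by the first copy of $\bar{A}_t$ and $(j,l)$ that contributed by the second copy. Taking expectations entrywise (legitimate since $\bar{A}_t$ has finite second moments) gives $(\Sigma_A')_{(i-1)n+j,\,(k-1)n+l}=\mathds{E}\{\bar{A}_{t,ik}\bar{A}_{t,jl}\}$. I would then instantiate this with the row index taken as $(i-1)n+j$ or $(j-1)n+i$ and the column index as $(k-1)n+l$ or $(l-1)n+k$, which reproduces exactly the four entries displayed in \eqref{matrix:sigmaA}; since these entries are scalars they commute, so $\mathds{E}\{\bar{A}_{t,ik}\bar{A}_{t,jl}\}=\mathds{E}\{\bar{A}_{t,jl}\bar{A}_{t,ik}\}$ and likewise for the other diagonal pair, which is the symmetry visible in the displayed sub-block. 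The argument for $\Sigma_B'$ is word-for-word the same, the only change being $\bar{B}_t\in\mathds{R}^{n\times m}$, so that the column index is $(k-1)m+l$ with $1\le k,l\le m$, yielding $(\Sigma_B')_{(i-1)n+j,\,(k-1)m+l}=\mathds{E}\{\bar{B}_{t,ik}\bar{B}_{t,jl}\}$ and hence the structure in \eqref{matrix:sigmaB}.

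The only point that needs care — and the closest thing to an obstacle — is to keep the Kronecker-product indexing convention aligned with the column-major $\vect(\cdot)$ ordering used elsewhere in the paper (recall $\mathds{E}\{x_{t,i}x_{t,j}\}$ sits in position $(j-1)n+i$ of $X_t$), so that the row and column labels appearing in \eqref{matrix:sigmaA}--\eqref{matrix:sigmaB} are the ones that actually arise; interchanging the roles of the first and second copies of $\bar{A}_t$ would transpose each displayed sub-block. This entrywise description is precisely what will be exploited afterwards to characterize the equivalence class of covariance matrices that generate the same second-moment dynamics \eqref{eq:vecCorrelationDynamics}.
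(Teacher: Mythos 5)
Your proposal is correct and is exactly the argument the paper intends: the paper's proof is the one-line remark that the structure "follows directly from properties of Kronecker products," and you have simply spelled out the entrywise indexing $(\bar{A}_t\otimes\bar{A}_t)_{(i-1)n+j,(k-1)n+l}=\bar{A}_{t,ik}\bar{A}_{t,jl}$ and applied linearity of expectation, which reproduces \eqref{matrix:sigmaA}--\eqref{matrix:sigmaB} exactly. Your cautionary note about keeping the Kronecker indexing consistent with the column-major $\vect(\cdot)$ convention is apt but does not change the substance.
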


\begin{proof}
The conclusions follow directly from properties of Kronecker products.
\end{proof}

\begin{theorem}\label{thm:identifiability}
The entries of $\tilde{\Sigma}_A'$ consists of:\\
$\mathds{E}\{\bar{A}_{t,ik}\bar{A}_{t,ik}\},~ 1\le i,k \le n$;\\
$\mathds{E}\{\bar{A}_{t,ik}\bar{A}_{t,jk}\},~ i < j,~ 1\le i,j,k \le n$;\\
$2\mathds{E}\{\bar{A}_{t,ik}\bar{A}_{t,il}\},~ k < l,~ 1\le i,k,l \le n$;\\
$\mathds{E}\{\bar{A}_{t,ik}\bar{A}_{t,jl}\} + \mathds{E}\{\bar{A}_{t,il}\bar{A}_{t,jk}\},~ 1\le i < j \le n, 1\le k < l \le n$.\\ 
The entries of $\tilde{\Sigma}_B'$ consists of:\\
$\mathds{E}\{\bar{B}_{t,ik}\bar{B}_{t,ik}\},~ 1\le i \le n,~ 1\le k \le m$;\\
$\mathds{E}\{\bar{B}_{t,ik}\bar{B}_{t,jk}\}, ~1\le i< j\le n,~ 1\le k\le m$;\\
$2\mathds{E}\{\bar{B}_{t,ik}\bar{B}_{t,il}\},~ 1\le i\le n,~ 1\le k<l \le m$;\\
$\mathds{E}\{\bar{B}_{t,ik}\bar{B}_{t,jl}\} + \mathds{E}\{\bar{B}_{t,il}\bar{B}_{t,jk}\}, ~1\le i < j \le n, 1\le k < l \le m$.
\end{theorem}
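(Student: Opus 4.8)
The plan is to obtain the claimed entry lists by evaluating $\tilde\Sigma_A'=P_1\Sigma_A'Q_1$ and $\tilde\Sigma_B'=P_1\Sigma_B'Q_2$ directly, feeding in the entrywise form of $\Sigma_A'$ and $\Sigma_B'$ from the preceding lemma together with the explicit definitions of $P_1,Q_1,P_2,Q_2$. Since $\Sigma_A'=\mathds{E}\{\bar A_t\otimes\bar A_t\}$, the Kronecker-product identity gives $(\Sigma_A')_{(i-1)n+j,\,(k-1)n+l}=\mathds{E}\{\bar A_{t,ik}\bar A_{t,jl}\}$ for all $1\le i,j,k,l\le n$ (the lemma's \eqref{matrix:sigmaA} is the generic off-diagonal $2\times2$ block of this array), and likewise $(\Sigma_B')_{(i-1)n+j,\,(k-1)m+l}=\mathds{E}\{\bar B_{t,ik}\bar B_{t,jl}\}$ with $i,j\in[1,n]$ and $k,l\in[1,m]$. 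Everything after this is index bookkeeping.

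First I would translate the reduction operators into elementary row and column moves. The matrix $P_1$ simply deletes, for each $1\le i<j\le n$, the row indexed $(j-1)n+i$, hence it retains exactly one coordinate per off-diagonal symmetric pair $\{i,j\}$ (the one indexed $(i-1)n+j$) together with the $n$ diagonal coordinates $(i-1)n+i$; the same holds for the rows of $\Sigma_B'$, which are indexed by $n$-pairs. For the columns, since $Q_1$ is $T_1$ with columns $(j-1)n+i$ ($i<j$) removed, and since column $(i-1)n+j$ of $T_1$ equals $e_{(i-1)n+j}+e_{(j-1)n+i}$ when $i<j$ and equals $e_{(i-1)n+i}$ on the diagonal, right multiplication by $Q_1$ \emph{adds} the column $(l-1)n+k$ of $\Sigma_A'$ onto its column $(k-1)n+l$ for each $k<l$ and then deletes the former, while leaving the diagonal columns untouched; $Q_2$ does the same over $m$-pairs on the columns of $\Sigma_B'$. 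Checking this description of the columns of $T_1$ against the vectorization order used for $X_t$ (resp. $U_t$) is the one point that needs genuine care; it is a finite verification but easy to get wrong, and this is where I expect the only real obstacle to lie.

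With the two moves in hand, the entries of $\tilde\Sigma_A'$ are read off by splitting into four cases according to whether the retained row pair $\{i,j\}$ and column pair $\{k,l\}$ are diagonal: (i) both diagonal yields $(\Sigma_A')_{(i-1)n+i,(k-1)n+k}=\mathds{E}\{\bar A_{t,ik}\bar A_{t,ik}\}$; (ii) row diagonal with $k<l$ yields, after the column addition, $\mathds{E}\{\bar A_{t,ik}\bar A_{t,il}\}+\mathds{E}\{\bar A_{t,il}\bar A_{t,ik}\}=2\mathds{E}\{\bar A_{t,ik}\bar A_{t,il}\}$; (iii) $i<j$ with column diagonal yields $\mathds{E}\{\bar A_{t,ik}\bar A_{t,jk}\}$ with no addition; (iv) $i<j$ and $k<l$ yields $\mathds{E}\{\bar A_{t,ik}\bar A_{t,jl}\}+\mathds{E}\{\bar A_{t,il}\bar A_{t,jk}\}$. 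These are precisely the four families listed for $\tilde\Sigma_A'$. Running the identical argument with $\Sigma_B'$, $Q_2$, and the $\bar B$ Kronecker identity above — the row reduction still over the $n$-pairs, the column reduction now over the $m$-pairs — produces the four families listed for $\tilde\Sigma_B'$. No analytic content is involved beyond the index-matching in the column step.
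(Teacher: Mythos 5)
Your proposal is correct and follows essentially the same route as the paper's proof: both read off the entries of $P_1\Sigma_A'Q_1$ (equivalently, the outcome of the column-addition/row-deletion steps of the simplification procedure) from the Kronecker-product entry formula $(\Sigma_A')_{(i-1)n+j,(k-1)n+l}=\mathds{E}\{\bar A_{t,ik}\bar A_{t,jl}\}$ and split into the four diagonal/off-diagonal cases, with the identical argument for $\tilde\Sigma_B'$. Your version is merely more explicit about the structure of $T_1$ and $Q_1$, and your index bookkeeping (retained rows $(i-1)n+j$, $i<j$; column $(l-1)n+k$ added onto $(k-1)n+l$ and then deleted) checks out against the paper's definitions.
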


\begin{proof}
Consider the simplifying procedure from \eqref{eq:vecCorrelationDynamics} to \eqref{eq:vecCorrelationDynamics_simp}. Step 1 generates columns of additions with form $\mathds{E}\{\bar{A}_{t,ik}\bar{A}_{t,jl}\} + \mathds{E}\{\bar{A}_{t,il}\bar{A}_{t,jk}\}$ for $1\le k < l \le n$. Moreover, if $i = j$, then the corresponding term becomes $2\mathds{E}\{\bar{A}_{t,ik}\bar{A}_{t,il}\}$ . By removing the $[(l-1)n+k]$-th row of $\Sigma_A'$ in step 2, $1\le k < l \le n$, these entries,  $\mathds{E}\{\bar{A}_{t,jk}\bar{A}_{t,il}\} + \mathds{E}\{\bar{A}_{t,jl}\bar{A}_{t,ik}\}, 1\le i < j \le n$ are removed, for all $k < l$. This step also deletes the column consisting of  $\mathds{E}\{\bar{A}_{t,il}\bar{A}_{t,jk}\}$ and $  \mathds{E}\{\bar{A}_{t,jl}\bar{A}_{t,ik}\})$, $1\le i,j \le n$. The entries $\mathds{E}\{\bar{A}_{t,ik}\bar{A}_{t,ik}\}$, $1\le i,k \le n$, remains the same during the process. Same argument for results of $\tilde{\Sigma}_B'$.
\end{proof}

\begin{remark}
The above discussion indicates that $X_t$ is in fact determined by parameter matrices $(A,B)$ and $(\tilde{\Sigma}_A',\tilde{\Sigma}_B')$. It also shows that there exists a set of covariance matrices $(\Sigma_A', \Sigma_B')$ that are equivalent, i.e., $S_{\Sigma}^+ := \{(\Sigma_1', \Sigma_2') \in \mathds{R}^{n^2 \times (n^2 + nm)} : P_1 \Sigma_1' Q_1 = \tilde{\Sigma}_A',P_1 \Sigma_2' Q_2 = \tilde{\Sigma}_B', F(\Sigma_1',n,n,n,n) \succeq 0, F(\Sigma_2',n,m,n,m) \succeq 0\}$, in the sense that they generate an identical second moment dynamic of \eqref{theSystem}. The last two terms in the set definition are due to the positive semi-definiteness of $\Sigma_A$ and $\Sigma_B$. Obviously, $S_{\Sigma}^+$ is not empty for $(\tilde{\Sigma}_A',\tilde{\Sigma}_B')$ obtained in \eqref{eq:vecCorrelationDynamics_simp}, since $(\Sigma_A', \Sigma_B') \in S_{\Sigma}^+$. From an entry-wise point of view, because of multiplicative noises, we cannot recover the exact value of $\mathds{E}\{\bar{A}_{t,ik}\bar{A}_{t,jl}\}$ and $\mathds{E}\{\bar{A}_{t,il}\bar{A}_{t,jk}\}$, if $i\not = j$ and $k \not = l$. One may only estimate the sum of these two entries instead out of $X_t$. Fortunately, some entries of $\Sigma_A'$ and $\Sigma_B'$ are identifiable, e.g., $\mathds{E}\{\bar{A}_{t,ik}\bar{A}_{t,ik}\}$, which is the variance of $\bar{A}_{t,ik}$. This means that if we introduce more conditions for the covariance structure, then it can be obtained exactly. For example, if entries in $\bar{A}_t$ are mutually independent, then $\Sigma_A$ is diagonal, and all of its entries except $\mathds{E}\{\bar{A}_{t,ik}\bar{A}_{t,ik}\}$, $1 \le i,k \le n$, are zero.
\end{remark}

\begin{example}\label{exam:identifiability}
Consider \eqref{theSystem} with $n=2$ and $m=1$, where
$X_t = \left[\mathds{E}\{ x_{t,1} x_{t,1}\} ~ \mathds{E}\{ x_{t,2} x_{t,1}\} ~ \mathds{E}\{ x_{t,1} x_{t,2}\} ~ \mathds{E}\{ x_{t,2} x_{t,2}\}\right]^T$. So $\mathds{E}\{ X_{t,2} X_{t,1}\}$ and $\mathds{E}\{ X_{t,1} X_{t,2}\}$ are identical and have the same update rule in \eqref{eq:vecCorrelationDynamics}. Under this situation, 
\begin{align*}
&\tilde{X}_{t} = \left[\mathds{E}\{ x_{t,1} x_{t,1}\} ~~ \mathds{E}\{ x_{t,2} x_{t,1}\} ~~ \mathds{E}\{ x_{t,2} x_{t,2}\}\right]^T,\\
&P_1 = \begin{bmatrix}
1 & 0 & 0 & 0\\
0 & 1 & 0 & 0\\
0 & 0 & 0 & 1
\end{bmatrix}, ~
Q_1 = \begin{bmatrix}
1 & 0 & 0 \\
0 & 1 & 0 \\
0 & 1 & 0 \\
0 & 0 & 1
\end{bmatrix}, 
\\
&T_1 = \begin{bmatrix}
1 & 0 & 0 & 0 \\
0 & 1 & 0 & 0 \\
0 & 1 & 0 & 0 \\
0 & 0 & 0 & 1 
\end{bmatrix}, P_2 = Q_2 = T_2 = 1.
\end{align*}
According to the above simplification, from 
\begin{align*}
\Sigma_A' &= \begin{bmatrix}
\sigma_{a, 11,11} & \sigma_{a, 11,12} & \sigma_{a, 12,11} & \sigma_{a, 12,12}\\
\sigma_{a, 11,21} & \sigma_{a, 11,22} & \sigma_{a, 12,21} & \sigma_{a, 12,22}\\
\sigma_{a, 21,11} & \sigma_{a, 21,12} & \sigma_{a, 22,11} & \sigma_{a, 22,12}\\
\sigma_{a, 21,21} & \sigma_{a, 21,22} & \sigma_{a, 22,21} & \sigma_{a, 22,22}
\end{bmatrix},\\
\Sigma_B' &= \begin{bmatrix}
\sigma_{b, 11} & \sigma_{b, 12} & \sigma_{b,21} & \sigma_{b,22}
\end{bmatrix}^T,
\end{align*}
we have
\begin{align*}
\tilde{\Sigma}_A' &= \begin{bmatrix}
\sigma_{a, 11,11} & \sigma_{a, 11,12} + \sigma_{a, 12,11} & \sigma_{a, 12,12}\\
\sigma_{a, 11,21} & \sigma_{a, 11,22} + \sigma_{a, 12,21} & \sigma_{a, 12,22}\\
\sigma_{a, 21,21} & \sigma_{a, 21,22} + \sigma_{a, 22,21} & \sigma_{a, 22,22}
\end{bmatrix},\\
\tilde{\Sigma}_B' &= \begin{bmatrix}
\sigma_{b, 11} & \sigma_{b, 12} & \sigma_{b,22}
\end{bmatrix}^T,
\end{align*}
where $\sigma_{a, ij,kl} = \mathds{E}\{\bar{A}_{t,ij} \bar{A}_{t,kl}\}$, $\sigma_{b, ij} = \mathds{E}\{\bar{B}_{t,i}, \bar{B}_{t,j}\}$, and \begin{align*}
&\tilde{A} = \begin{bmatrix}
a_{11}a_{11} & a_{11}a_{12} + a_{12}a_{11} & a_{12}a_{12}\\
a_{11}a_{21} & a_{11}a_{22} + a_{12}a_{21} & a_{12}a_{22}\\
a_{21}a_{21} & a_{21}a_{22} + a_{22}a_{21} & a_{22}a_{22}
\end{bmatrix}, \\
&\tilde{B} = \begin{bmatrix}
b_1b_1 & b_1b_2  & b_2b_2
\end{bmatrix} ^T, \\
&K_{AB} = \begin{bmatrix}
a_{11}b_1 & a_{12}b_1\\
a_{21}b_1 & a_{22}b_2\\
a_{21}b_2 & a_{22}b_2
\end{bmatrix}, \quad
K_{BA} = \begin{bmatrix}
a_{11}b_1 & a_{12}b_1\\
a_{11}b_2 & a_{12}b_2\\
a_{21}b_2 & a_{22}b_2
\end{bmatrix}.
\end{align*}
\end{example}

The first and second moment dynamics \eqref{eq:expectationDynamics} and \eqref{eq:vecCorrelationDynamics_simp} are linear in the dynamic model parameters to be estimated. It is natural to consider a two-stage least-squares procedure, where first the nominal system matrices ($A$,$B$) are estimated from \eqref{eq:expectationDynamics}, and then these estimates are plugged in to obtain estimates for the variances ($\Sigma_A$, $\Sigma_B$) from \eqref{eq:vecCorrelationDynamics_simp}. If we had access to the exact first and second moment dynamics, this procedure would produce exact estimates. % {\color{black} \sout{provided a sufficient rollout length and persistent excitation of both first \emph{and} second moment dynamics from the input mean $\nu_t$ \emph{and} second moment $U_t$}}. 
%Yu: Under the condition of knowing the exact first and second moment dynamics, we do not need any design and excitation? (because we can use least-squares directly)
However, we must estimate the first and second moment dynamics from rollout data, and we propose to take a sample average over multiple independent rollouts. 
To obtain persistently exciting inputs, we randomly generate the first and second moment of the input sequence %{\color{black}\sout{distribution}} 
from standard Gaussian and Wishart\footnote{The Wishart distribution $W_p(V,n)$ is the probability distribution of the matrix $X = G G^\intercal$ where each column of the matrix $G$ is drawn from the $p$-variate Gaussian distribution $\mathcal{N}_p(0,V)$. Clearly Wishart distributions are supported on the set of positive semidefinite matrices.} distributions \cite{gupta2018matrix}, respectively. Likewise, the initial states are assumed to be randomly drawn from a distribution $\mathcal{X}$ with finite second moment (see Sec. \ref{subsubsec:consistency}). The overall algorithm is shown in Algorithm \ref{alg:A}, where the superscript $(k)$ represents the $k$-th rollout. 

% \begin{remark}
% Since ${\Sigma}_A$ and ${\Sigma}_B$ are the covariance matrices of $\bar{A}_t$ and $\bar{B}_t$ they must be positive semidefinite. Hence a positive semidefinite constraint must be imposed on the optimization problem in line 15 of Algorithm \ref{alg:A} which can be easily achieved by generic convex optimization parser-solvers such as \textup{CVX} in \textup{MATLAB} \cite{cvx}. %, Grant2008}. 
% Note that $\Sigma_A^\prime$ and $\Sigma_B^\prime$ are related to $\Sigma_A$ and $\Sigma_B$ via one-to-one mappings by inverse of the $F(\cdot)$ operator. However, if the estimator is consistent (as we will prove later it is), then as the amount of sample data increases the estimated covariances will become arbitrarily close to the true values and the semidefinite constraint will naturally become ineffective.
% \end{remark}

In Alg. \ref{alg:A} and in the sequel, to ease notation, we omit the "$\tilde{~}$" above $X_t$, $U_t$, $\Sigma_A'$, and $\Sigma_B'$, but readers should keep in mind that they are the simplified version of their counterparts in \eqref{eq:vecCorrelationDynamics}.

\begin{algorithm}[t]
% \caption{Least-squares system identification for linear systems with multiplicative noise from multiple trajectory data}
\caption{\newline Multiple-trajectory averaging least-squares (MALS)}
% \caption{Multiple trajectory averaging least-squares (MTrAveLS)}
% \caption{Multiple trajectory averaging least-squares (MultiTrAveLS)}
\label{alg:A}
\begin{algorithmic}[1]
\FOR{$t$ from $0$ to $\ell$}
\STATE{Generate $\nu_t$ and $\bar{U}_t$ independently from zero-mean Gaussian and Wishart \cite{gupta2018matrix} distributions, respectively. Both $\nu_t$ and $\bar{U}_t$ are fixed after generation}
\ENDFOR
\FOR{$k$ from $1$ to $n_r$}
\STATE{Generate $x_{0}^{(k)}$ independently from the distribution $\mathcal{X}$} \\
\FOR{$t$ from $0$ to $\ell-1$}
\STATE{Generate $u_t^{(k)}$ independently from the Gaussian distribution $\mathcal{N}(\nu_t, \bar{U}_t)$}
\STATE{$x_{t+1}^{(k)} = (A + \bar{A}_t^{(k)}) x_t^{(k)} + (B + \bar{B}_t^{(k)}) u_t^{(k)}$}
\ENDFOR
\ENDFOR
\FOR{$t$ from $0$ to $\ell$}
\STATE{Compute 
\begin{align*}
    \hat{\mu}_t & \Let \frac1{n_r} \sum_{k = 1}^{n_r} x_t^{(k)} , \\ 
    \hat{X}_t   & \Let \frac{1}{n_r} \vect \left( \sum_{k = 1}^{n_r} x_t^{(k)} (x_t^{(k)})^\intercal \right) , \\
    \hat{W}_t   & \Let \frac{1}{n_r} \vect \left( \sum_{k = 1}^{n_r} x_t^{(k)} \nu_t^\intercal \right) = \vect(\hat{\mu}_t \nu_t^\intercal ), \\
    \hat{W}_t'   & \Let \frac{1}{n_r} \vect \left( \sum_{k = 1}^{n_r} \nu_t {x_t^{(k)}}^\intercal \right) = \vect(\nu_t \hat{\mu}_t^\intercal ), \\
         U_t    & \Let \vect(\bar{U}_t + \nu_t\nu_t^\intercal)
\end{align*}
}
\ENDFOR
\STATE{$(\hat{A},\hat{B}) = {\text{argmin}}_{(A, B)} \{ \frac12 \sum_{t=0}^{\ell-1}  \|\hat{\mu}_{t+1} - (A\hat{\mu}_{t}+B \nu_{t}) \|_2^2 \}$}
\STATE{$(\hat{\Sigma}_A',\hat{\Sigma}_B') = %\underset{\Sigma_A^\prime \succeq 0, \Sigma_B^\prime \succeq 0}{\text{argmin}} 
\text{argmin}_{(\Sigma_A', \Sigma_B')} \{ \frac12 \sum_{t=0}^{\ell-1}  \|\hat{X}_{t+1} - [\tilde{A} \hat{X}_t + K_{BA} \hat{W}_t + K_{AB} \hat{W}_t' + \tilde{B} U_t  + \Sigma_A' \hat{X}_t + \Sigma_B' U_t] \|_2^2 \}$}
% \STATE{$\hat{\Sigma}_A = F(\hat{\Sigma}_A', n, n, n, n) $, $\hat{\Sigma}_B = F(\hat{\Sigma}_B', n, m, n, m)$}
\end{algorithmic}
\end{algorithm}

\subsection{Theoretical Consistency Analysis}

In this section we analyze the consistency of Algorithm \ref{alg:A} by investigating the moment dynamics \eqref{eq:expectationDynamics} and \eqref{eq:vecCorrelationDynamics_simp}, which motivated the least-squares approach in Algorithm \ref{alg:A}. 
%Due to space constraints proofs of the results are deferred to supplementary material. TODO: decide arXiv?
%Yu: I uploaded the paper to arXiv, and we can cite it. (Not announced yet..)

\subsubsection{Moment Dynamics}
Note again if we know $\mu_t$ and $X_t$, then it is possible to recover the parameters via least-squares as in lines $14$-$15$ in Algorithm \ref{alg:A}. Let
\begin{equation}
\begin{aligned}\label{defLSMatrices}
    &\mathbf{Y}
    := \begin{bmatrix}
    \mu_{\ell} & \cdots & \mu_1
    \end{bmatrix},\quad
    \mathbf{Z}
    := \begin{bmatrix}
    \mu_{\ell-1} & \cdots & \mu_0\\
    \nu_{\ell-1} & \cdots & \nu_0
    \end{bmatrix}, \\
    &\mathbf{C} :=
    \begin{bmatrix}
    C_\ell & \cdots & C_1
    \end{bmatrix}, \quad
    \mathbf{D} :=
    \begin{bmatrix}
    X_{\ell-1} & \cdots & X_0\\
    U_{\ell-1} & \cdots & U_0
    \end{bmatrix}, 
\end{aligned}
\end{equation}
where $C_t = X_t - [\tilde{A} X_{t-1} + K_{BA} W_{t-1} + K_{AB} W_{t-1}' + \tilde{B} U_{t-1}]$, $1 \le t \le \ell$. Then closed-form solutions of the least-squares problems are
\begin{align*}
    (\hat{A}, \hat{B}) &= \mathbf{Y} \mathbf{Z}^\intercal (\mathbf{Z} \mathbf{Z}^\intercal)^{\dagger},\\
    (\hat{\Sigma}_A', \hat{\Sigma}_B') &= \mathbf{C} \mathbf{D}^\intercal (\mathbf{D} \mathbf{D}^\intercal)^{\dagger},
\end{align*}
where $\mathbf{C}$, $\mathbf{D}$, $\mathbf{Y}$, and $\mathbf{Z}$ are defined in \eqref{defLSMatrices} above, and the sign $\dagger$ represents the pseudoinverse. When the inverse matrices exist, the solutions are identical to true values, that is, $(\hat{A}, \hat{B}) = (A, B)$ and $(\hat{\Sigma}_A', \hat{\Sigma}_B') = (\Sigma_A', \Sigma_B')$.
Hence, the first question towards the consistency of the algorithm is whether the matrices $\mathbf{Z} \mathbf{Z}^\intercal$ and $\mathbf{D} \mathbf{D}^\intercal$ are invertible, which is necessary for the consistency of the algorithm.
As to be shown, this invertibility can be obtained by designing a proper input sequence, if systems $(A, B)$ and $(\tilde{A} + \Sigma_A', \tilde{B} + \Sigma_B')$ are controllable, and the final time-step $\ell$ is large enough. In fact, in this paper we randomly generalized the first and second moments of inputs to ensure the invertibility. As a consequence, we need to demonstrate the following results in a probability sense, intuitively saying that random generation of input statistics results in the expected invertibility.
%The precise statements are presented in Theorems \ref{thm: as_Z_fullrank} and \ref{thm: as_D_fullrank}.

\begin{theorem}\label{thm: as_Z_fullrank}
Suppose that $\ell \ge \frac12 mn^2 + \frac12 mn + m + 1$ and $(A, B)$ is controllable. The matrix $\mathbf{Z}$ has full row rank with probability one, and consequently $\mathbf{Z} \mathbf{Z}^\intercal$ is invertible, if the entries of $\nu_t$, $0 \le t \le \ell-1$, are generated i.i.d. from a non-degenerate Gaussian distributions.
\end{theorem}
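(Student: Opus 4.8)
The plan is to regard the stacked vector of prescribed input means $\nu := (\nu_0^\intercal,\dots,\nu_{\ell-1}^\intercal)^\intercal \in \mathds{R}^{m\ell}$ as the only randomness entering $\mathbf{Z}$, and to show that $\{\nu : \mathbf{Z}\text{ is row-rank deficient}\}$ is a Lebesgue-null subset of $\mathds{R}^{m\ell}$. Since each entry of $\nu$ is drawn from a non-degenerate scalar Gaussian, the joint law of $\nu$ has a density with respect to Lebesgue measure, so this null set carries probability zero, which gives the claim.

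First I would record the affine dependence of $\mathbf{Z}$ on $\nu$. Iterating the first-moment recursion \eqref{eq:expectationDynamics} gives $\mu_t = A^t\mu_0 + \sum_{s=0}^{t-1} A^{t-1-s} B \nu_s$, so every entry of $\mathbf{Z}$ is an affine function of $\nu$. By the Cauchy--Binet formula, $\det(\mathbf{Z}\mathbf{Z}^\intercal)$ equals the sum of squares of the $(n+m)\times(n+m)$ minors of $\mathbf{Z}$, hence is a polynomial $q$ in $\nu$; and $\mathbf{Z}$ has full row rank $n+m$ --- equivalently $\mathbf{Z}\mathbf{Z}^\intercal$ is invertible --- precisely when $q(\nu)\neq 0$. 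A polynomial on $\mathds{R}^{m\ell}$ that is not the zero polynomial vanishes only on a set of measure zero, so the entire statement reduces to checking $q\not\equiv 0$, i.e.\ to exhibiting one deterministic sequence $\nu^\star$ of length $\ell$ for which $\mathbf{Z}(\nu^\star)$ has full row rank.

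This is the step that uses controllability of $(A,B)$ and the assumed lower bound on $\ell$. Because $\mathbf{Z}$ is affine in $\nu$, it is enough to do this for $\mu_0=0$: writing $\mathbf{Z}(\nu)=\mathbf{Z}(0)+L(\nu)$ with $L$ the homogeneous (linear) part, one has $\tfrac1c\mathbf{Z}(c\nu^\star)\to L(\nu^\star)$ as $c\to\infty$, so by lower semicontinuity of rank a full-rank value of $L$ forces a full-rank value of $\mathbf{Z}$, and $L$ is exactly $\mathbf{Z}$ with $\mu_0$ set to $0$. I would then build $\nu^\star$ by dedicating $m$ time-steps to inputs that form a basis of $\mathds{R}^m$, which makes the corresponding columns of $\mathbf{Z}$ independent in their lower $m$ rows, and --- since the controllability matrix $[B\ AB\ \cdots\ A^{n-1}B]$ has rank $n$, so that an arbitrary state is reachable in $n$ steps --- steering the state over the remaining horizon successively through a basis of $\mathds{R}^n$, playing zero input at the $n$ instants where those states appear as columns. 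Inspecting the lower $m$ rows first and then the upper $n$ rows shows that these $n+m$ columns are linearly independent, so $\mathbf{Z}(\nu^\star)$ has full row rank. For $m\ge 2$ this construction consumes about $m+n(n+1)$ time-steps, which is at most $\tfrac12 mn^2+\tfrac12 mn+m+1$; for $m=1$ a single impulse input already works and needs only $\ell\ge n+1$. (Equivalently, one can invoke the classical fact that a controllable pair driven by an input persistently exciting of order $n+1$ produces a full-row-rank data matrix of the type $\mathbf{Z}$, together with the fact that an i.i.d.\ Gaussian sequence is persistently exciting of any fixed order compatible with the horizon with probability one, from which the ``with probability one'' in the statement is inherited directly.)

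I expect the main obstacle to be this construction of $\nu^\star$ --- in particular verifying that the horizon assumed in the hypothesis is enough --- or, on the alternative route, pinning down the persistency-of-excitation order and matching it to the stated bound; the Cauchy--Binet and measure-zero arguments, and the affine bookkeeping, are routine.
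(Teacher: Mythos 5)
Your proposal is correct, and its overall skeleton (affine dependence of $\mathbf{Z}$ on $\nu$, full row rank $\Leftrightarrow$ a polynomial $q(\nu)\neq 0$, a nonzero polynomial vanishes on a Lebesgue-null set, absolute continuity of the Gaussian law) coincides with the paper's Lemma~\ref{lem: polynomial} and Lemma~\ref{zeroset_Z_fullrank}. Where you genuinely diverge is in the one step that carries all the content: showing $q\not\equiv 0$. The paper fixes a specific $(n+m)$-order minor of $\mathbf{Z}$ (choosing columns indexed by the $h_d$'s built from a basis $A^{s_k}B_{f_k(r)}$ extracted from the controllability matrix) and computes the coefficient of one carefully chosen monomial in the $\nu_{t,j}$'s, showing that coefficient equals the nonzero determinant \eqref{key_minor_ab}; this handles arbitrary $\mu_0$ for free, since that coefficient does not involve $\mu_0$, but it is notationally heavy. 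You instead exhibit an explicit witness input sequence $\nu^\star$: the scaling argument $\tfrac1c\mathbf{Z}(c\nu^\star)\to L(\nu^\star)$ with lower semicontinuity of rank cleanly disposes of $\mu_0\neq 0$, and the steering construction ($m$ steps for an input basis, then $n$ reachability legs of $n$ steps each followed by a zero-input observation step) gives $n+m$ columns that are visibly independent by inspecting the lower $m$ rows and then the upper $n$ rows. Your budget $m+n(n+1)$ does fit under $\tfrac12 mn^2+\tfrac12 mn+m+1$ for $m\ge 2$, and your separate impulse construction covers $m=1$ with $\ell\ge n+1$; this case split is the price of the more elementary route, and the counting is the part a referee would scrutinize, but it checks out. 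The parenthetical appeal to persistency of excitation is not needed and would require its own justification, so I would drop it; the witness construction stands on its own.
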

\begin{proof}
See Appendix A.
\end{proof}

\begin{remark}
The above theorem shows that for large enough time step of each rollout, the full row rank condition of $\mathbf{Z}$ can be guaranteed with probability one if the mean of the input at each time step is generated randomly and independently. In the proof, the controllability of $(A, B)$ plays a key role. In addition, although the lower bound in the theorem is relatively small, one may conjecture that $\ell \ge n+m$ is a sharp lower bound for the invertibility of $\mathbf{Z}\mathbf{Z}^\intercal$, which will be a future work.
\end{remark}

\begin{theorem}\label{thm: as_D_fullrank}
Suppose that $\ell \ge \frac12 m^2n^4 + \frac12 m^2n^2 + m^2 + 1$ and $(\tilde{A} + \Sigma_A', \tilde{B} + \Sigma_B')$ is controllable. The matrix $\mathbf{D}$ has full row rank with probability one, and consequently $\mathbf{D} \mathbf{D}^\intercal$ is invertible, if $\nu_t$ have been fixed and the entries of $\bar{U}_t$ are generated i.i.d. from a non-degenerate Wishart distributions, $0 \le t \le \ell-1$, where $\bar{U}_t$ is defined in line 2 of Algorithm \ref{alg:A}.
\end{theorem}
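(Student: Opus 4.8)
The plan is to run the argument of Theorem~\ref{thm: as_Z_fullrank} with the first-moment dynamics \eqref{eq:expectationDynamics} replaced by the simplified second-moment dynamics \eqref{eq:vecCorrelationDynamics_simp}: the matrix $\mathbf{D}$ plays, relative to \eqref{eq:vecCorrelationDynamics_simp}, the role that $\mathbf{Z}$ plays relative to \eqref{eq:expectationDynamics}, with the input $\nu_t$ now replaced by the effective input $U_t = \vect(\bar U_t + \nu_t\nu_t^\intercal)$, and the state/input dimensions $n$ and $m$ replaced by the dimensions $n(n+1)/2$ and $m(m+1)/2$ of \eqref{eq:vecCorrelationDynamics_simp}. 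Over-bounding these latter quantities by $n^2$ and $m^2$ turns the threshold $\ell \ge \tfrac12 m n^2 + \tfrac12 mn + m + 1$ of Theorem~\ref{thm: as_Z_fullrank} into the claimed threshold $\ell \ge \tfrac12 m^2 n^4 + \tfrac12 m^2 n^2 + m^2 + 1$, so no new ``counting'' is required; the work is in adapting the two ingredients of the proof in Appendix~A — a polynomial/measure-zero argument and an explicit certifying input — to the present setting.

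For the first ingredient, note that once $\{\nu_t\}$ is fixed the sequences $\{\mu_t\}$, $\{W_t\}$, $\{W_t'\}$ are deterministic, while $U_t$ is an affine function of the upper-triangular entries of $\bar U_t$. Unrolling \eqref{eq:vecCorrelationDynamics_simp} as $X_t = \Phi^t X_0 + \sum_{s < t}\Phi^{\,t-1-s}(\Gamma U_s + v_s)$ with $\Phi := \tilde A + \Sigma_A'$, $\Gamma := \tilde B + \Sigma_B'$, $v_s := K_{BA}W_s + K_{AB}W_s'$, and $X_0 = \vect(\mathds{E}\{x_0 x_0^\intercal\})$ fixed, every entry of $\mathbf{D}$ — hence $\det(\mathbf{D}\mathbf{D}^\intercal)$ and every maximal minor of $\mathbf{D}$ — is a polynomial in the entries of $(\bar U_0,\dots,\bar U_{\ell-1})$. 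A non-degenerate Wishart law is absolutely continuous with respect to Lebesgue measure on the open, full-dimensional cone of positive-definite symmetric matrices, and a not-identically-zero polynomial vanishes only on a Lebesgue-null subset of that cone; therefore it suffices to exhibit one choice of $(\bar U_0,\dots,\bar U_{\ell-1})$ making $\mathbf{D}$ have full row rank, and by openness of the cone it even suffices to exhibit symmetric (not necessarily positive-definite) $\bar U_t$'s that do so.

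For the second ingredient I would use controllability of $(\tilde A + \Sigma_A', \tilde B + \Sigma_B')$, exactly as Appendix~A uses controllability of $(A,B)$. Since $\bar U_t \mapsto U_t$ is an affine bijection of $\mathds{R}^{m(m+1)/2}$, choosing the $\bar U_t$'s amounts to choosing the $U_t$'s freely; controllability then lets me steer, over an initial block of time steps, through a collection of states $X_t$ spanning the $\tfrac{n(n+1)}{2}$-dimensional state space, and over a later block vary $U_t$ over a basis of the $\tfrac{m(m+1)}{2}$-dimensional input space, and one checks that after interleaving the blocks no nonzero row vector $(p^\intercal~ q^\intercal)$ annihilates every column $(X_t^\intercal~ U_t^\intercal)^\intercal$ of $\mathbf{D}$, so that a maximal minor of $\mathbf{D}$ is nonzero; the horizon consumed is exactly the claimed bound, and combining with the previous paragraph gives full row rank with probability one and invertibility of $\mathbf{D}\mathbf{D}^\intercal$. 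The main obstacle is precisely this construction: in contrast with the first-moment case, \eqref{eq:vecCorrelationDynamics_simp} carries a known but nonzero drift $v_s$ and $U_t$ has the nonzero offset $\nu_t\nu_t^\intercal$, so the reachable set of states is an affine rather than a linear object, and one must verify this does not prevent the stacked columns from spanning $\mathds{R}^{n(n+1)/2 + m(m+1)/2}$; one must also keep the symmetry constraint on $\bar U_t$ straight (the free coordinates being the upper-triangular entries) and ensure the final ``nonzero polynomial $\Rightarrow$ full rank almost surely'' step is applied on the positive-definite cone that actually supports the Wishart law, rather than on the full space of symmetric matrices.
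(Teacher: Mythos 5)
Your proposal is correct and follows essentially the same route as the paper: Appendix~B simply rewrites \eqref{eq:vecCorrelationDynamics_simp} as $X_{t+1}=\breve{A}X_t+\breve{B}U_t+\eta_t$ with $\eta_t=K_{BA}W_t+K_{AB}W_t'$ fixed, treats $U_t$ as the input, and invokes the polynomial/measure-zero and controllability argument of Theorem~\ref{thm: as_Z_fullrank}. The additional points you raise (absolute continuity of the non-degenerate Wishart law on the positive-definite cone, the affine bijection $\bar U_t\mapsto U_t$, and the harmlessness of the known drift for the monomial-coefficient argument) are exactly the details the paper leaves implicit, and you resolve them correctly.
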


\begin{proof}
See Appendix B.
\end{proof}

\begin{remark}
The controllability condition in Theorem \ref{thm: as_D_fullrank} reflects the nature of the multiplicative noise, i.e., coupling between $\bar{A}_t$ and $x_t$, and that between $\bar{B}_t$ and $u_t$. It also indicates that a controllability condition on \eqref{eq:vecCorrelationDynamics}, the dynamics of the second %{\color{black}\sout{-order}} 
moments of states, is necessary to ensure the successful identification of $\Sigma_A'$ and $\Sigma_B'$.
\end{remark}

\begin{corollary}
Suppose that $\ell \ge \frac12 m^2n^4 + \frac12 m^2n^2 + m^2 + 1$, and both $(A, B)$ and $(\tilde{A} + \Sigma_A', \tilde{B} + \Sigma_B')$ are controllable. The matrices $\mathbf{Z}\mathbf{Z}^\intercal$ and $\mathbf{D}\mathbf{D}^\intercal$ are invertible, if first the entries of $\nu_t$ are generated i.i.d. from a non-degenerate Gaussian distribution and then $\bar{U}_t$ is generated i.i.d. from a non-degenerate Wishart distribution, $0 \le t \le \ell-1$, where $\bar{U}_t$ is defined in line 2 of Algorithm \ref{alg:A}.
\end{corollary}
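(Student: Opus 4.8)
The plan is to read off the corollary from Theorems~\ref{thm: as_Z_fullrank} and~\ref{thm: as_D_fullrank}; essentially no new analysis is needed, only a short argument that two ``probability one'' conclusions can be amalgamated when the two blocks of randomness $\{\nu_t\}$ and $\{\bar U_t\}$ are drawn one after the other.

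First I would verify that the single horizon hypothesis here is enough for both theorems. Since $n,m\ge 1$ we have $\tfrac12 m^2n^4 \ge \tfrac12 mn^2$, $\tfrac12 m^2 n^2 \ge \tfrac12 mn$, and $m^2 \ge m$, hence $\tfrac12 m^2n^4 + \tfrac12 m^2n^2 + m^2 + 1 \ge \tfrac12 mn^2 + \tfrac12 mn + m + 1$, so the assumption $\ell \ge \tfrac12 m^2n^4 + \tfrac12 m^2n^2 + m^2 + 1$ also meets the horizon bound of Theorem~\ref{thm: as_Z_fullrank}. The controllability of $(A,B)$ and of $(\tilde A + \Sigma_A', \tilde B + \Sigma_B')$ are both assumed, so the structural hypotheses of the two theorems hold as well.

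Next, let $E_1$ be the event that $\mathbf{Z}$ has full row rank (equivalently that $\mathbf{Z}\mathbf{Z}^\intercal$ is invertible) and $E_2$ the event that $\mathbf{D}$ has full row rank (equivalently that $\mathbf{D}\mathbf{D}^\intercal$ is invertible). Observe that $\mathbf{Z}$ from \eqref{defLSMatrices} is a deterministic function of $(\nu_0,\dots,\nu_{\ell-1})$ alone, because the $\mu_t$ are generated from the fixed $\mu_0$ and the $\nu_t$ via \eqref{eq:expectationDynamics}; Theorem~\ref{thm: as_Z_fullrank} then gives $\Pr(E_1) = 1$. The matrix $\mathbf{D}$ depends on both $(\nu_0,\dots,\nu_{\ell-1})$ and $(\bar U_0,\dots,\bar U_{\ell-1})$, but Theorem~\ref{thm: as_D_fullrank} applies for \emph{any} fixed realization of $(\nu_0,\dots,\nu_{\ell-1})$, giving $\Pr(E_2 \mid \nu_0,\dots,\nu_{\ell-1}) = 1$ for every such realization. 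Taking expectation over $(\nu_0,\dots,\nu_{\ell-1})$ (law of total probability) yields $\Pr(E_2) = \mathds{E}\{\Pr(E_2 \mid \nu_0,\dots,\nu_{\ell-1})\} = 1$. A union bound on complements, $\Pr(E_1^c \cup E_2^c) \le \Pr(E_1^c) + \Pr(E_2^c) = 0$, then gives $\Pr(E_1 \cap E_2) = 1$, i.e.\ with probability one both $\mathbf{Z}\mathbf{Z}^\intercal$ and $\mathbf{D}\mathbf{D}^\intercal$ are invertible, as claimed.

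I do not expect a genuine obstacle here: the only delicate point is the passage from the conditional statement of Theorem~\ref{thm: as_D_fullrank} to the unconditional $\Pr(E_2)=1$, which relies on that theorem holding uniformly over the fixed $\nu$-realization together with the measurability needed for the tower property. Since Theorem~\ref{thm: as_D_fullrank} is stated precisely in the ``$\nu_t$ have been fixed'' form, this step is routine, and no independence between $\{\nu_t\}$ and $\{\bar U_t\}$ is even required (the independence the algorithm provides would only give a slightly cleaner restatement of the conditioning).
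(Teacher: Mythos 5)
Your proof is correct and matches the paper's intent: the paper states this corollary without proof as an immediate consequence of Theorems~\ref{thm: as_Z_fullrank} and~\ref{thm: as_D_fullrank}, and your argument (checking that the single horizon bound dominates both, then combining the two probability-one events via the tower property and a union bound) is exactly the routine glue the authors left implicit.
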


\begin{remark}
From the proof of Theorems \ref{thm: as_Z_fullrank} and \ref{thm: as_D_fullrank}, we know that the existence of the inverses of $\mathbf{Z}\mathbf{Z}^\intercal$ and $\mathbf{D}\mathbf{D}^\intercal$ can in fact be guaranteed with probability one, as long as $\nu_t$ and $\bar{U}_t$, the mean and vectorized second moment matrix of the input at time $t$, are generated independently from a distribution that is absolutely continuous with respect to Lebesgue measure. Also note the random generation of the first and second moments of inputs leads to non-stationarity of the input sequence. Critically this provides sufficient excitation of both the first and second moments of the state and makes it possible to estimate all model parameters in the presence of multiplicative noise.
\end{remark}

\subsubsection{Consistency} \label{subsubsec:consistency}
After the discussion in the previous section, we now assume that the means and second moments of the input sequences have been generated in Algorithm \ref{alg:A}, and both $\mathbf{Z}\mathbf{Z}^\intercal$ and $\mathbf{D}\mathbf{D}^\intercal$ have been designed to be invertible. The closed-form estimates generated by Algorithm \ref{alg:A} are
\begin{align}\label{LSEstimator1}
    (\hat{A}, \hat{B}) &= \hat{\mathbf{Y}} \hat{\mathbf{Z}}^\intercal (\hat{\mathbf{Z}} \hat{\mathbf{Z}}^\intercal)^{\dagger},\\\label{LSEstimator2}
    (\hat{\Sigma}_A', \hat{\Sigma}_B') &= \hat{\mathbf{C}} \hat{\mathbf{D}}^\intercal (\hat{\mathbf{D}} \hat{\mathbf{D}}^\intercal)^{\dagger},
\end{align}
where
\begin{align*}
    &\hat{\mathbf{Y}}
    := \begin{bmatrix}
    \hat{\mu}_{\ell} & \cdots & \hat{\mu}_1
    \end{bmatrix},\quad
    \hat{\mathbf{Z}}
    := \begin{bmatrix}
    \hat{\mu}_{\ell-1} & \cdots & \hat{\mu}_0\\
    \nu_{\ell-1} & \cdots & \nu_0
    \end{bmatrix}, \\
    &\hat{\mathbf{C}} :=
    \begin{bmatrix}
    \hat{C}_\ell & \cdots & \hat{C}_1
    \end{bmatrix}, \quad
    \hat{\mathbf{D}} :=
    \begin{bmatrix}
    \hat{X}_{\ell-1} & \cdots & \hat{X}_0\\
    U_{\ell-1} & \cdots & U_0
    \end{bmatrix}, 
\end{align*}
and $\hat{C}_t = \hat{X}_t - [\hat{\tilde{A}} \hat{X}_{t-1} + \hat{K}_{BA} \hat{W}_{t-1} + \hat{K}_{AB} \hat{W}_{t-1}' + \hat{\tilde{B}} U_{t-1}]$, $1 \le t \le \ell$. Here $\hat{\tilde{A}}$, $\hat{\tilde{B}}$, $\hat{K}_{AB}$, and $\hat{K}_{BA}$ are estimates of $\tilde{A}$, $\tilde{B}$, $K_{AB}$, and $K_{BA}$, obtained by using $\hat{A}$ and $\hat{B}$ from Algorithm \ref{alg:A}. The estimates above depend on the number of rollouts $n_r$, but we omit it for convenience. Before stating the consistency result, we present the following assumption for the system and data:

\begin{assumption}\label{asmp1}
For all rollouts, the below conditions hold. \\
(i) The final time-step is fixed to be $\ell \geq \frac{1}{2} m^2n^4 + \frac{1}{2} m^2n^2 + m^2 + 1$.\\
(ii) The initial state $x_0^{(k)}$, $1 \le k \le n_r$, is generated independently from the same distribution with $\mathds{E}\{\|x_0^{(k)}\|^2\} < \infty$, and is independent of the subsequent process. \\
(iii) $\{\bar{A}_t^{(k)}, 0 \le t \le \ell\}$ and $\{\bar{B}_t^{(k)}, 0 \le t \le \ell\}$, $1 \le k \le n_r$, have zero mean and finite second moments, i.e., $\mathds{E}\{\bar{A}_t\} = \mathds{E}\{\bar{B}_t\} = \mathbf{0}$ and $\|\Sigma_A\|, \|\Sigma_B\| < \infty$. Also, these sequences are i.i.d. and mutually independent.\\%they are mutually independent, and i.i.d. respectively.\\
(iv) The input signals are generated by Line 6 of Algorithm \ref{alg:A} and are such that both $\mathbf{Z}\mathbf{Z}^\intercal$ and $\mathbf{D}\mathbf{D}^\intercal$ are invertible.
\end{assumption}

Under Assumption \ref{asmp1} the rollouts $x_0^{(k)}, \dots, x_l^{(k)}$, $1 \le k \le n_r$, are i.i.d., so consistency can be established from Kolmogorov's strong law of large numbers. %\cite{van2000asymptotic}. 

\begin{theorem}{(Consistency)}\label{thm: consistency}
Suppose that Assumption \ref{asmp1} holds, then the estimators \eqref{LSEstimator1}-\eqref{LSEstimator2} are asymptotically consistent, i.e., 
\begin{align*}
    (\hat{A}, \hat{B}) \to (A, B), \ \ \text{ and } \ \
    (\hat{\Sigma}_A', \hat{\Sigma}_B') \to (\Sigma_A', \Sigma_B'), 
\end{align*}
with probability one as the number of rollouts $n_r \to \infty$.
\end{theorem}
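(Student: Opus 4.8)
The plan is to reduce the whole statement to Kolmogorov's strong law of large numbers applied to the empirical moments, and then carry the almost-sure convergence through the closed-form least-squares solutions by a continuity argument; the one delicate ingredient is that the Moore--Penrose pseudoinverse is discontinuous at rank-deficient matrices, which is precisely why Assumption~\ref{asmp1}(iv) (equivalently, Theorems~\ref{thm: as_Z_fullrank}--\ref{thm: as_D_fullrank}) is needed.

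First I would record the integrability required by the SLLN. A finite-horizon induction on the recursion \eqref{theSystem}, using the mutual independence of $\bar A_t^{(k)}$, $\bar B_t^{(k)}$, $u_t^{(k)}$, $x_t^{(k)}$ together with $\mathds{E}\{\|x_0^{(k)}\|^2\}<\infty$ and $\|\Sigma_A\|,\|\Sigma_B\|<\infty$ from Assumption~\ref{asmp1}(ii)--(iii), gives $\mathds{E}\{\|x_t^{(k)}\|^2\}<\infty$ for every $0\le t\le\ell$, hence $\mathds{E}\{\|x_t^{(k)}\|\}<\infty$ and $\mathds{E}\{\|x_t^{(k)}(x_t^{(k)})^\intercal\|\}<\infty$. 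Since under Assumption~\ref{asmp1} the rollouts $(x_0^{(k)},\dots,x_\ell^{(k)})$ are i.i.d.\ (i.i.d.\ initial states, i.i.d.\ and mutually independent noise sequences, and inputs drawn afresh and independently in each rollout from the fixed laws $\mathcal{N}(\nu_t,\bar U_t)$), Kolmogorov's SLLN yields, for each $t$, $\hat\mu_t\to\mathds{E}\{x_t\}=\mu_t$ and $\hat X_t\to\vect(\mathds{E}\{x_tx_t^\intercal\})=X_t$ with probability one, where $\mu_t$ and $X_t$ are exactly the sequences solving the moment dynamics \eqref{eq:expectationDynamics} and \eqref{eq:vecCorrelationDynamics_simp}. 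Because $x_t^{(k)}$ is a function of $x_0^{(k)}$ and of $u_0^{(k)},\dots,u_{t-1}^{(k)}$ (and the noises before time $t$), it is independent of $u_t^{(k)}$, so $\mathds{E}\{x_tu_t^\intercal\}=\mu_t\nu_t^\intercal$; hence $\hat W_t=\vect(\hat\mu_t\nu_t^\intercal)\to W_t$ and $\hat W_t'\to W_t'$ a.s., while $U_t$ is deterministic.

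Next I would propagate these convergences through the estimators. The maps $A\mapsto\tilde A=P_1(A\otimes A)Q_1$, $B\mapsto\tilde B$, and $(A,B)\mapsto(K_{AB},K_{BA})$ are polynomial, hence continuous. Since $\hat{\mathbf{Y}}\to\mathbf{Y}$ and $\hat{\mathbf{Z}}\to\mathbf{Z}$ a.s., and $\mathbf{Z}\mathbf{Z}^\intercal$ is invertible by Assumption~\ref{asmp1}(iv), and since the invertible matrices form an open set on which $M\mapsto M^{-1}$ is continuous, for all sufficiently large $n_r$ the matrix $\hat{\mathbf{Z}}\hat{\mathbf{Z}}^\intercal$ is invertible and $(\hat{\mathbf{Z}}\hat{\mathbf{Z}}^\intercal)^{\dagger}=(\hat{\mathbf{Z}}\hat{\mathbf{Z}}^\intercal)^{-1}\to(\mathbf{Z}\mathbf{Z}^\intercal)^{-1}=(\mathbf{Z}\mathbf{Z}^\intercal)^{\dagger}$ a.s.; therefore $(\hat A,\hat B)=\hat{\mathbf{Y}}\hat{\mathbf{Z}}^\intercal(\hat{\mathbf{Z}}\hat{\mathbf{Z}}^\intercal)^{\dagger}\to\mathbf{Y}\mathbf{Z}^\intercal(\mathbf{Z}\mathbf{Z}^\intercal)^{\dagger}=(A,B)$ a.s. Feeding $\hat A\to A$, $\hat B\to B$ through the above continuous maps gives $\hat{\tilde A}\to\tilde A$, $\hat{\tilde B}\to\tilde B$, $\hat K_{AB}\to K_{AB}$, $\hat K_{BA}\to K_{BA}$ a.s., so $\hat C_t=\hat X_t-[\hat{\tilde A}\hat X_{t-1}+\hat K_{BA}\hat W_{t-1}+\hat K_{AB}\hat W_{t-1}'+\hat{\tilde B}U_{t-1}]\to C_t$ a.s., i.e.\ $\hat{\mathbf{C}}\to\mathbf{C}$ and $\hat{\mathbf{D}}\to\mathbf{D}$ a.s.; repeating the open-set/continuity argument with the invertibility of $\mathbf{D}\mathbf{D}^\intercal$ from Assumption~\ref{asmp1}(iv) gives $(\hat\Sigma_A',\hat\Sigma_B')=\hat{\mathbf{C}}\hat{\mathbf{D}}^\intercal(\hat{\mathbf{D}}\hat{\mathbf{D}}^\intercal)^{\dagger}\to\mathbf{C}\mathbf{D}^\intercal(\mathbf{D}\mathbf{D}^\intercal)^{\dagger}=(\Sigma_A',\Sigma_B')$ a.s. Only finitely many exceptional null sets (indexed by $t$) arise, so their union is null and both limits hold simultaneously with probability one.

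I expect the main obstacle to be exactly the discontinuity of the pseudoinverse at rank-deficient matrices: for finite $n_r$ the empirical Gram matrices $\hat{\mathbf{Z}}\hat{\mathbf{Z}}^\intercal$ and $\hat{\mathbf{D}}\hat{\mathbf{D}}^\intercal$ may be singular, and $M\mapsto M^{\dagger}$ cannot be passed to the limit naively. This is resolved by invoking Assumption~\ref{asmp1}(iv), which holds by Theorems~\ref{thm: as_Z_fullrank}--\ref{thm: as_D_fullrank} under the controllability hypotheses and the randomized Gaussian/Wishart input design: the limiting Gram matrices are then nonsingular, a neighborhood of each lies in the open set of invertible matrices, and on that set inversion is continuous and coincides with the pseudoinverse. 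The remaining pieces — the finite-horizon second-moment induction securing $L^1$-integrability for the SLLN, and the identifications $\hat X_t\to\vect(\mathds{E}\{x_tx_t^\intercal\})$ and $\mathds{E}\{x_tu_t^\intercal\}=\mu_t\nu_t^\intercal$ — are routine.
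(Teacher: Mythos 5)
Your proposal is correct and follows essentially the same route as the paper's proof: Kolmogorov's strong law of large numbers applied to the i.i.d.\ finite-second-moment rollouts, followed by a continuity (continuous mapping) argument through the closed-form least-squares formulas, with the invertibility of $\mathbf{Z}\mathbf{Z}^\intercal$ and $\mathbf{D}\mathbf{D}^\intercal$ from Assumption~\ref{asmp1}(iv) resolving the pseudoinverse issue. Your explicit ``eventually invertible, so the pseudoinverse coincides with the inverse'' step is a slightly more careful rendering of what the paper handles by invoking Lemma~\ref{lem: polynomial} and a zero-matrix convention, but it is the same argument in substance.
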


\begin{proof}
See Appendix C.
\end{proof}

\begin{remark}
This theorem indicates that despite the relatively small final time-step for each trajectory, an increasing number of rollouts compensates for this deficiency and guarantees asymptotic estimation performance.
\end{remark}
% \begin{remark}
% The matrices $\hat{\mathbf{Z}}$ and $\hat{\mathbf{D}}$ become full-rank after $n+m$ and $n^2+m^2$ time steps respectively with probability 1 if the multiplicative noise has non-degenerate density with respect to Lebesgue measure of corresponding dimension. To see this, note that the input $u_t$ is a vector of continuous random variables which are statistically independent between all rollouts and time steps. Thus $\bar{B}_t u_t$ is a vector of continuous random variables which are statistically independent of previous time steps and of all rollouts. Since $\hat{\mu}_t$ is the sum of the $x_t$, $\hat{\mu}_t$ is also a vector of continuous random variables which are statistically independent of previous time steps. Therefore all the $\hat{\mu}_t$ are linearly independent and $\hat{\mathbf{Z}}$ becomes full-rank with probability 1 after $n+m$ steps. A similar argument applies for $\hat{\mathbf{D}}$. Nevertheless, we must impose a larger lower bound on $\ell$ from Assumption \ref{asmp1} in order to achieve the consistency guarantee of Theorem \ref{thm: consistency} as the number of rollouts $n_r \rightarrow \infty$.
% \end{remark}

From the estimates of $\tilde{\Sigma}_A'$ and $\tilde{\Sigma}_B'$ (note that in Theorem \ref{thm: consistency} we omit the notation "$\tilde{~}$" for simplicity), explicit forms of covariance structure $\Sigma_A$ and $\Sigma_B$ can be given as follows:
\begin{align*}
    \Sigma_A(\alpha) &= F\left( Q_1 \tilde{\Sigma}_A' Q_3^\intercal + E_{\alpha}, n, n, n, n \right),\\
    \Sigma_B(\beta) &= F\left( Q_1 \tilde{\Sigma}_B' Q_4^\intercal + E_{\beta}, n, m, n, m \right),
\end{align*}
where 
\begin{align*}
    E_{\alpha} &= \underset{1\le k < l \le n}{\underset{1 \le i < j \le n}{\sum}} \bigg[ \alpha_{ij,kl} (e_{(i-1)n+j} - e_{(j-1)n+i})\\
    &\qquad\qquad\qquad \cdot (e_{(k-1)n+l} - e_{(l-1)n+k})^\intercal\bigg],\\
    E_{\beta} &= \underset{1\le k < l \le m}{\underset{1 \le i < j \le n}{\sum}} \bigg[ \beta_{ij,kl} (e_{(i-1)n+j} - e_{(j-1)n+i})\\
    &\qquad\qquad\qquad \cdot (f_{(k-1)m+l} - f_{(l-1)m+k})^\intercal\bigg],
\end{align*}
with $I_{n^2} = [e_1~\cdots~e_{n^2}]$, $I_{m^2} = [f_1~\cdots~f_{m^2}]$, $\alpha_{ij,kl}, \beta_{ij,pq} \in \mathds{R}$, $1\le i< j\le n$, $1\le k< l\le n$, $1\le p < q \le m$, and $Q_1$ is defined after \eqref{eq:vecCorrelationDynamics_simp}. In addition, $Q_3 = D_n Q_1$ and $Q_4 = D_m Q_2$, where $D_n$ is an $n^2$-dimensional diagonal matrix with $[(i-1)n+j]$-th diagonal entry being $1/2$ and the rest being $1$, $1\le i\not= j\le n$, and $D_m$ is an $m^2$-dimensional diagonal matrix with $[(i-1)m+j]$-th diagonal entry being $1/2$ and the rest being $1$, $1\le i\not= j\le m$.

\section{Numerical Simulations}\label{numericalSimulations}

To empirically validate our theoretical consistency result, we simulated our least-squares estimator %\footnote{For computational efficiency we solve the unconstrained least-squares problem for $\hat{\Sigma}_A$ and $\hat{\Sigma}_B$ then project onto the positive semidefinite cone; for all but the smallest numbers of rollouts this projection is ineffective.} 
on two example systems. The first is a simple 2-state, 1-input system where we use a large amount of data to show \textit{asymptotic trends}, while the second is an 8-state, 8-input system representing lossy diffusion dynamics on a network for a more \textit{practical application}. Python code which implements the algorithms and performs the simulated experiments described here is available on GitHub at https://github.com/TSummersLab/sysid-multinoise%\url{https://github.com/TSummersLab/sysid-multinoise}.

\subsection{Simple example}
We consider a simple example system with $n=2$, $m=1$,
${
    A  =
    \begin{bmatrix}
        -0.2 & 0.3 \\
        -0.4 & 0.8
    \end{bmatrix}, 
    B = 
    \begin{bmatrix} 
        -1.8 \\
        -0.8
    \end{bmatrix},
}$
and noise covariances
\begin{align*}
    \Sigma_A & = \frac{1}{100}
    \begin{bmatrix}
         8 & -2 & 0 &  0     \\
        -2 & 16 & 2 &  0     \\
         0 & 2  & 2 &  0     \\
         0 & 0  & 0 &  8   
    \end{bmatrix} ,
    \Sigma_B = \frac{1}{100}
    \begin{bmatrix}
         5 &  -2 \\
        -2 &  20
    \end{bmatrix}.
\end{align*}

\begin{figure}
\begin{center}
\includegraphics[height=6cm]{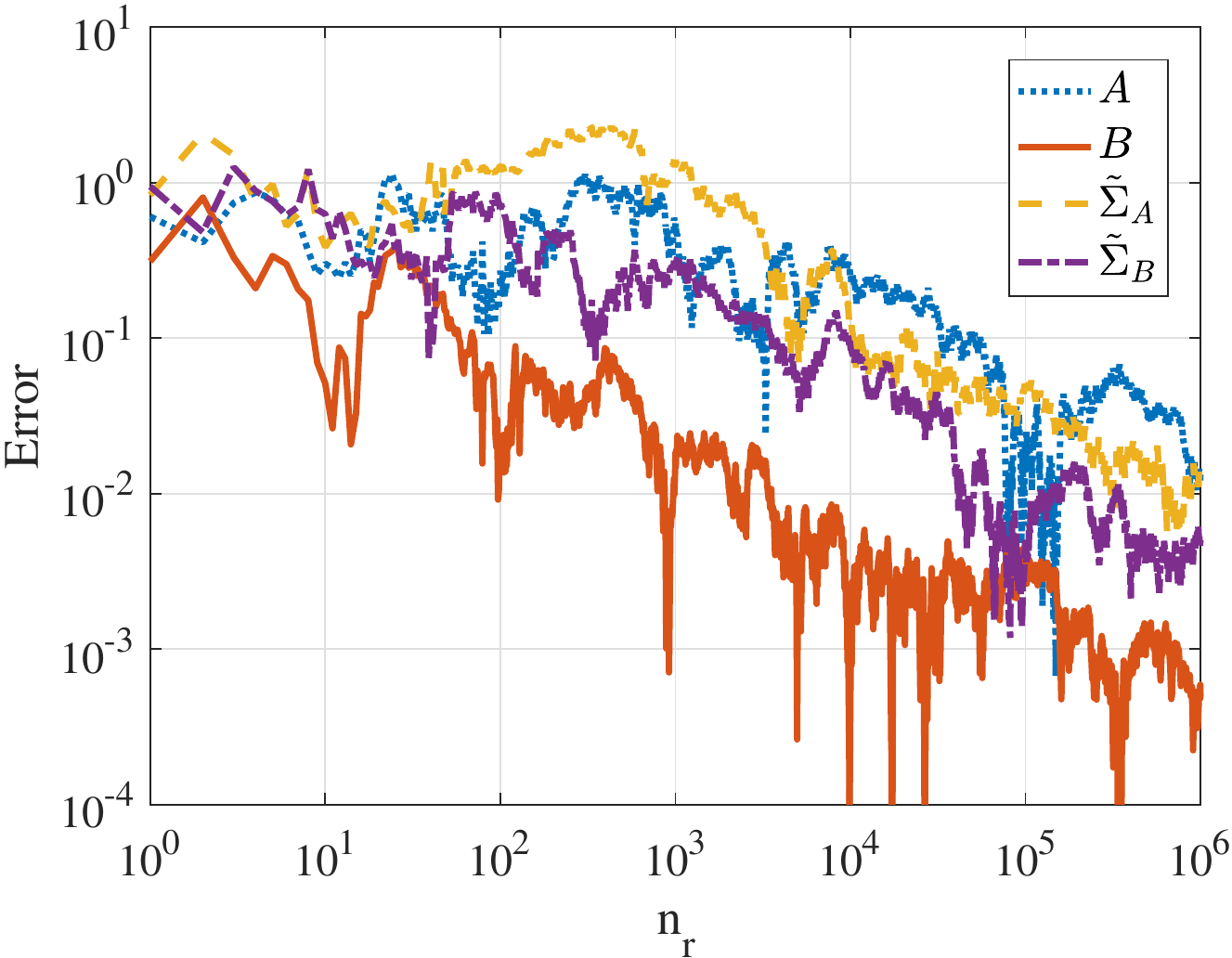}    % The printed column  
\caption{Consistency of Alg. \ref{alg:A}.}  % width is 8.4 cm.
\label{fig:error_convergence}                                 % Size the figures 
\end{center}                                 % accordingly.
\end{figure}

According to the reshaping operator $G$ defined in the notation subsection and Example \ref{exam:identifiability}, we have
\begin{align*}
    \tilde{\Sigma}_A' & = \frac{1}{100}
    \begin{bmatrix}
         8 & 0 & 2      \\
        -2 & 2 & 0     \\
         16 & 0 & 8  
    \end{bmatrix} ,
    \tilde{\Sigma}_B' = \frac{1}{100}
    \begin{bmatrix}
         5 &  -2 & 20
    \end{bmatrix}^\intercal.
\end{align*}

We performed a simulated experiment where rollout data of length ${\ell = \frac{1}{2} m^2n^4 + \frac{1}{2} m^2n^2 + m^2 + 1 = 12}$. 
We used control inputs distributed as $u_t \sim \mathcal{N}(\nu_t,\bar{U}_t)$, where $\nu_t$ and $\bar{U}_t$ are generated from $\mathcal{N}(0, I_n )$ and $\mathcal{W}_n(0.1 I_n,n)$, respectively, and then are fixed. 
Model estimates were computed at 100 increasing logarithmically spaced numbers of rollouts between $1$ and $n_{r}$. The result is plotted in Fig. \ref{fig:error_convergence}, indicating the consistency of the proposed algorithm.

% It is well known that least-squares estimation of finite-impulse response (FIR) models yields estimates whose error decreases as $\mathcal{O}(N^{-1/2})$ where $N$ is the number of samples \cite{Ljung1986}. Empirically we observed a similar trend on the convergence rate; we conjecture that $\hat{A}$ and $\hat{B}$ converge to $A$ and $B$ as $\mathcal{O}(n_r^{-1/2})$ while $\hat{\Sigma}_A$ and $\hat{\Sigma}_B$ converge to $\Sigma_A$ and $\Sigma_B$ at a slower rate, perhaps $\mathcal{O}(n_r^{-\frac{1}{2n}})$ and $\mathcal{O}(n_r^{-\frac{1}{2m}})$ respectively. This can be observed in the similarity between the dashed red reference curves and the median empirical model estimate curves. However it is difficult, even for small example systems like this one, to deduce the asymptotic convergence rate from empirical observation.

% Although our consistency result implies that theoretically the estimates converge regardless of the size of the variance of the random inputs, in practice the design of the inputs has a large impact on the (transient) quality of the estimates. In particular, it is important to strike a balance between exciting the system modes and multiplicative noises and not overwhelming the state- and input-dependent noises by the (random) control inputs i.e. the magnitude of the control input means $\nu_t$ and covariances $\bar{U}_t$ should be chosen in a ``sweet spot''. We used values that gave good results empirically for the example system considered.

%%%%%%%%%%%%%%%%%%%%%%%%%%%%%%%%%%
% START: IN-WORK MATERIAL

\subsection{Network example}

Many practical networked systems can be approximated by diffusion dynamics with loss; examples include heat flow through uninsulated pipes, hydraulic flow through leaky pipes, information flow between processors with packet loss, electrical power flow between generators with resistant electrical power lines, etc. These dynamics in continuous-time act on an undirected graph with no self-loops with symmetric weighted adjacency matrix $A_c$, degree matrix $D_c = \text{diag}(A_c \textbf{1}_{n \times 1})$, graph Laplacian $L = D_c - A_c$, diagonal loss matrix $F_c$, and diagonal input matrix $B_c$:
\begin{align}
    \dot{x} = -(L_c + F_c) x + B_u u
\end{align}
Discretizing these dynamics using the forward Euler method with a step size $T$ yields $x_{t+1} = A x_t + B u_t$
% \begin{align}
%     x_{t+1} &= x_t - T (L_c + F_c) x_t + T B_u u_t \\
%             &= (I - T (L_c + F_c)) x_t + (T B_u) u_t \\
%             &= A x_t + B u_t
% \end{align}
where $A = I - T (L_c + F_c)$ and $B = T B_u$. 
Uncertainty on an edge weight of the graph i.e. on entry $(j,k)$ of $A_c$ manifests as a noise matrix with entries
\begin{align*}
    [A_i]_{p,q} = 
    \begin{cases}
    +1 \text{ if } \{ j=p \text{ and } k=p \} \text{ or } \{ j=q \text{ and } k=q \} , \\
    -1 \text{ if } \{ j=q \text{ and } k=p \} \text{ or } \{ j=p \text{ and } k=q \} , \\
    0 \text{ otherwise. }
    \end{cases}
\end{align*}
Uncertainty on an input strength i.e. entry $(k,k)$ of $B_u$ manifests as a noise matrix with entries
\begin{align*}
    [B_j]_{p,q} = 
    \begin{cases}
    +1 \text{ if } k=q=p, \\
    0 \text{ otherwise. }
    \end{cases}    
\end{align*}

For computational tractability we estimated only the noise variances while giving the estimator knowledge of the noise directions $A_i$ and $B_j$. To formulate this setting mathematically it is easier to work with the eigendecomposition of the noises as in \eqref{eq:system_eigen_noises}. The least-squares estimation for this case is a simple modification to the full covariance estimator:
\begin{align}\label{LSEstimator1_var_only}
    (\hat{\sigma}^2, \hat{\delta}^2) &= \hat{\mathbf{C}} \hat{\mathbf{D}}^\intercal (\hat{\mathbf{D}} \hat{\mathbf{D}}^\intercal)^{\dagger},
\end{align}
where
$(\hat{\sigma}^2, \hat{\delta}^2)$ are vectors of the noise variances,
% \begin{align*}
%     \hat{\mathbf{C}} & :=
%     {\vect} \big(
%     \begin{bmatrix}
%     \hat{C}_\ell & \cdots & \hat{C}_1
%     \end{bmatrix}
%     \big), \\
%     \hat{\mathbf{D}} & :=
%     \begin{bmatrix}
%     {\vect} \big( (A_1 \otimes A_1) \hat{X}_{\ell-1} & \cdots & (A_1 \otimes A_1) \hat{X}_0 \big) \\
%     \vdots \\
%     {\vect} \big( (A_r \otimes A_r) \hat{X}_{\ell-1} & \cdots & (A_r \otimes A_r) \hat{X}_0 \big) \\
%     {\vect} \big( (B_1 \otimes B_1) U_{\ell-1} & \cdots & (B_1 \otimes B_1) U_0 \big) \\
%     \vdots \\
%     {\vect} \big( (B_s \otimes B_s) U_{\ell-1} & \cdots & (B_s \otimes B_s) U_0 \big)
%     \end{bmatrix}, 
% \end{align*}
\begin{align*}
    \hat{\mathbf{C}} & :=
    {\vect} \big(
    \begin{bmatrix}
    \hat{C}_\ell & \cdots & \hat{C}_1
    \end{bmatrix}
    \big), \\
    \hat{\mathbf{D}} & :=
    \begin{bmatrix}
    {\vect} \big( \tilde{A}_1 \hat{X}_{\ell-1} & \cdots & \tilde{A}_1 \hat{X}_0 \big) \\
    \vdots \\
    {\vect} \big( \tilde{A}_r \hat{X}_{\ell-1} & \cdots & \tilde{A}_r \hat{X}_0 \big) \\
    {\vect} \big( \tilde{B}_1 U_{\ell-1} & \cdots & \tilde{B}_1 U_0 \big) \\
    \vdots \\
    {\vect} \big( \tilde{B}_s U_{\ell-1} & \cdots & \tilde{B}_s U_0 \big)
    \end{bmatrix}, 
\end{align*}
$\tilde{A}_i$ and $\tilde{B}_j$, $1\le i\le r$, $1\le j\le s$, are defined in the same way as $\tilde{A}$ and $\tilde{B}$ in \eqref{eq:vecCorrelationDynamics_simp} respectively, and $\hat{C}_t = \hat{X}_t - [\hat{\tilde{A}} \hat{X}_{t-1} + \hat{K}_{BA} \hat{W}_{t-1} + \hat{K}_{AB} \hat{W}_{t-1}^\intercal + \hat{\tilde{B}} U_{t-1}]$, $1 \le t \le \ell$. %As before, $\hat{A}$ and $\hat{B}$ are obtained by Algorithm \ref{alg:A}.

We chose a network with $n=8$ nodes and edges placed via the Erdos-Renyi random graph generation with random integer weights. The graph was selected to be connected so that the system would be controllable.
We used rollout data of length ${\ell = \frac{1}{2} m^2n^4 + \frac{1}{2} m^2n^2 + m^2 + 1 = 133185}$ and collected 7 rollouts; more rollouts could be used, but empirically this amount of data was sufficient to give good estimates. 
% The noise variances $\sigma^2_i$ were selected between $0.005$ and $0.020$ and $\delta^2_j$ between $0.05$ and $0.20$. 
Table \ref{tab:table} shows the averages and maximums of the normalized noise variance estimation errors
% $\overline{\sigma}_i^2 = \frac{| \sigma_i^2 - \hat{\sigma}_i^2 |}{\sigma_i^2}$ and $\overline{\delta}_j^2 = \frac{| \delta_j^2 - \hat{\delta}_j^2 |}{\delta_j^2}$
\begin{align*}
    \overline{\sigma}_i^2 = \frac{| \sigma_i^2 - \hat{\sigma}_i^2 |}{\sigma_i^2}, \quad \text{and} \quad \overline{\delta}_j^2 = \frac{| \delta_j^2 - \hat{\delta}_j^2 |}{\delta_j^2} .
\end{align*}

% \begin{figure}[!ht] 
%     \centering
%     \includegraphics[width=0.3\linewidth]{MALS/Images/network.png}
%     \caption{Graph representation of the networked system. Line widths are scaled to the edge weights.}
%     \label{fig:network_graph}
% \end{figure}

\vspace{-5mm}

\begin{table}[!ht] 
\caption{Estimation error averages and maximums.}
% \caption{ }
\centering
\normalsize
\begin{tabular}{ c c c c }
\toprule
  $ \frac{1}{r} \sum_i^r \overline{\sigma}_i^2$  
& $ \text{max}_i  \overline{\sigma}_i^2 $ 
& $ \frac{1}{s} \sum_j^s \overline{\delta}_j^2$  
& $ \text{max}_j  \overline{\delta}_j^2$  \\
\midrule\\
\addlinespace[-3ex]
  0.0358
& 0.132
& 0.0517
& 0.141
\\
\bottomrule
\end{tabular}

\label{tab:table}
\end{table}

% Figure \ref{} shows the entries of the system matrices, noise covariances, their estimates and the estimate error while Table \ref{} shows the Frobenius norm and maximum absolute entry of the error e.g. $A-\widehat{A}$.

% \begin{figure}
%     \centering
%     \includegraphics[width=\linewidth]{MALS/Images/network_estimates_AB.png}
%     \caption{Visual representation of the system matrices, noise covariances, their estimates and the estimate error. }
%     \label{fig:network_estimates_AB}
% \end{figure}

% \begin{figure}
%     \centering
%     \includegraphics[width=\linewidth]{MALS/Images/network_estimates_SigmaAB.png}
%     \caption{Visual representation of the noise covariances, their estimates and the estimate error. }
%     \label{fig:network_estimates_SigmaAB}
% \end{figure}

% END: IN-WORK MATERIAL
%%%%%%%%%%%%%%%%%%%%%%%%%%%%%%%%%%

\section{Conclusions}\label{conclusions}
In this paper we proposed a system identification scheme for linear systems with multiplicative noise based on multiple trajectory data. 
By designing appropriate persistently exciting input signals, a least-squares algorithm was proposed for the joint estimation of nominal system and multiplicative noise covariances.
The asymptotic consistency of the algorithm was proved, and illustrated by numerical simulations.
Ongoing and future research directions include studying the convergence rate and non-asymptotic behavior of the proposed algorithm, problems of optimal input design, identification from single-trajectory data, and sparsity-promoting regularization for identification of networked systems with prior knowledge of sparsity levels.

%%%%%%%%%%%%%%%%%%%%%%%%%%%%%%%%%%%%%%%%%%%%%%%%%%%%%%%%%%%%%%%%%%%%%%%%%%%%%%%%

%%%%%%%%%%%%%%%%%%%%%%%%%%%%%%%%%%%%%%%%%%%%%%%%%%%%%%%%%%%%%%%%%%%%%%%%%%%%%%%%

%\addtolength{\textheight}{-7cm}

%%%%%%%%%%%%%%%%%%%%%%%%%%%%%%%%%%%%%%%%%%%%%%%%%%%%%%%%%%%%%%%%%%%%%%%%%%%%%%%%

%\addtolength{\textheight}{-12cm}   % This command serves to balance the column lengths
                                  % on the last page of the document manually. It shortens
                                  % the textheight of the last page by a suitable amount.
                                  % This command does not take effect until the next page
                                  % so it should come on the page before the last. Make
                                  % sure that you do not shorten the textheight too much.

\section*{APPENDIX A: The Proof of Theorem \ref{thm: as_Z_fullrank}}    % Each appendix must have a short title.                                        % in the appendices.

We begin by stating a standard result regarding the zero set of a polynomial which will be needed later:
\begin{lemma}\label{lem: polynomial}
A polynomial function $\mathds{R}^n$ to $\mathds{R}$ is either identically $0$ or non-zero almost everywhere.
\end{lemma}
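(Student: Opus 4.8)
The plan is to prove the statement by induction on the number of variables $n$, reading ``non-zero almost everywhere'' as: the zero set of the polynomial has Lebesgue measure zero in $\mathds{R}^n$. First I would record the trivial observation that a polynomial is continuous, hence its zero set is closed and in particular Lebesgue measurable, so that every set appearing below is measurable and the later appeal to Fubini--Tonelli is legitimate.

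For the base case $n=1$, I would note that a one-variable polynomial that is not identically zero has at most finitely many roots (bounded by its degree), and a finite set has measure zero. For the inductive step, given a polynomial $p$ on $\mathds{R}^n$ that is not identically zero, I would expand it in powers of the last coordinate,
\[
    p(x_1,\dots,x_n) \;=\; \sum_{k=0}^{d} q_k(x_1,\dots,x_{n-1})\, x_n^{k},
\]
choose an index $k_0$ with $q_{k_0}\not\equiv 0$ (such $k_0$ exists precisely because $p\not\equiv 0$), and apply the inductive hypothesis to $q_{k_0}$ to conclude that $E:=\{x'\in\mathds{R}^{n-1}: q_{k_0}(x')=0\}$ has $(n-1)$-dimensional measure zero. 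Then for each fixed $x'\notin E$ the univariate polynomial $t\mapsto p(x',t)$ is not identically zero, so its root set is finite and hence of one-dimensional measure zero. Finally I would apply Tonelli's theorem to the indicator of $Z:=\{p=0\}$ to write $\lambda_n(Z)$ as the integral over $x'\in\mathds{R}^{n-1}$ of the one-dimensional measure of the slice $\{t:p(x',t)=0\}$; this integrand vanishes for every $x'\notin E$, i.e.\ $\lambda_{n-1}$-almost everywhere, so $\lambda_n(Z)=0$, which closes the induction.

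I do not expect a genuine obstacle here: this is a classical fact and the argument above is the textbook one. The only points that need a little care are (a) selecting $k_0$ as \emph{some} index with $q_{k_0}\not\equiv 0$ rather than, say, the nominal top-degree coefficient, which could itself be the zero polynomial if the degree is chosen carelessly; and (b) the measurability of the slices, so that Tonelli genuinely applies — both dispatched by the continuity remark at the outset.
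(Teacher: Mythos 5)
Your proof is correct. Note, however, that the paper does not actually argue this lemma at all: its ``proof'' is a one-line appeal to the literature (the zero-set results of Caron--Traynor and Federer). What you have written is the standard self-contained argument that those references encapsulate: induction on the dimension, expanding $p$ in powers of the last variable, picking a coefficient polynomial $q_{k_0}\not\equiv 0$, using the inductive hypothesis on $q_{k_0}$, the finiteness of roots of a nonzero univariate polynomial on each good slice, and Tonelli applied to the indicator of the (closed, hence measurable) zero set. Your two points of care --- choosing $k_0$ as \emph{some} index with nonvanishing coefficient rather than the nominal leading one, and the measurability needed for Tonelli --- are exactly the right ones, and both are handled. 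So the proposal buys a self-contained justification where the paper relies on citation; there is no gap.
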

\begin{proof}
The conclusion is a standard result from real analysis \cite{caron2005zero,federer2014geometric}%; we omit the proof due to space limitations.
\end{proof}

Now we provide a lemma which will naturally lead to the conclusion of Theorem \ref{thm: as_Z_fullrank}:
\begin{lemma}\label{zeroset_Z_fullrank}
Consider the moment dynamics \eqref{eq:expectationDynamics}. Suppose that $\ell > \frac12 mn^2 + \frac12 mn + m + 1$ and $(A, B)$ is controllable. The set of $\nu_t$ such that the rank of $\mathbf{Z}$ is less than $n+m$,
\begin{align}
    \mathcal{V} := \{(\nu_0^\intercal~ \cdots~ \nu_{\ell-1}^\intercal)^\intercal \in \mathds{R}^{m\ell}: \textup{rank}(\mathbf{Z}) < n+m\}
\end{align}
is of Lebesgue measure zero.
\end{lemma}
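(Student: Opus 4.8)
The plan is to reduce the claim to producing a single favorable input configuration, and then to produce one. Fix the initial mean $\mu_0$. Every entry of $\mathbf{Z}$ is an affine function of the stacked vector $\nu := (\nu_0^\intercal~\cdots~\nu_{\ell-1}^\intercal)^\intercal \in \mathds{R}^{m\ell}$, because $\mu_t = A^t\mu_0 + \sum_{s=0}^{t-1}A^{t-1-s}B\nu_s$ is affine in $\nu$. Hence every $(n+m)\times(n+m)$ minor of $\mathbf{Z}$ is a polynomial in $\nu$, and $\mathrm{rank}(\mathbf{Z}) < n+m$ holds precisely when all of these finitely many minor polynomials vanish. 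By Lemma \ref{lem: polynomial}, each such polynomial is either identically zero or nonzero almost everywhere, so as soon as one minor polynomial is not identically zero, $\mathcal{V}$ lies inside its zero set and is Lebesgue-null. The whole problem therefore reduces to exhibiting one $\nu^\star$ for which $\mathbf{Z}$ has full row rank $n+m$.

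To construct such a $\nu^\star$ I would use controllability of $(A,B)$. For each input direction $j=1,\dots,m$, apply a unit impulse $\nu=e_j$ (the $j$-th standard basis vector of $\mathds{R}^m$) at a designated time $\tau_j$ and then hold the input at zero for a stretch of subsequent steps. The column of $\mathbf{Z}$ at time $\tau_j$ then has input part $e_j$, while the columns during the zero-input stretch have input part $0$ and state parts that, up to an affine contribution of the earlier impulses, run through the vectors $B e_j, A B e_j, A^2 B e_j, \dots$; ranging over all $j$ the state parts contain the columns of the controllability matrix $[B~AB~\cdots~A^{n-1}B]$, which has rank $n$ by assumption. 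The subtlety is that, on a single trajectory with no resets, each impulse acts on a nonzero state, so the zero-input segments carry extra terms built from the earlier state; but by the Cayley--Hamilton theorem those extra terms already lie in the span of the previously generated columns, so the span of the whole family is unaffected. Selecting $n$ of the zero-input columns that span $\mathds{R}^n\times\{0\}$ together with the $m$ impulse columns gives $n+m$ columns: after subtracting the former from the latter the impulse columns reduce to $(0^\intercal,~e_j^\intercal)^\intercal$, so the $n+m$ columns are independent and $\mathbf{Z}$ has full row rank. Fitting the $m$ impulse-and-decay blocks, and if necessary a short preliminary segment that first steers $\mu$ to $0$, inside the horizon is exactly what the hypothesis $\ell > \frac12 mn^2 + \frac12 mn + m + 1$ leaves room for.

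The technical heart, and the step most likely to trip one up, is this explicit construction: verifying that one input sequence can simultaneously excite all $n$ state directions and all $m$ input directions while keeping the $n+m$ selected columns linearly independent despite the coupling between successive impulses along a single trajectory. Controllability together with Cayley--Hamilton is what makes it work, and the (non-tight) lower bound on $\ell$ is dictated by this bookkeeping. Once one full-rank $\nu^\star$ is available, the polynomial argument of the first paragraph finishes the proof: the corresponding $(n+m)\times(n+m)$ minor of $\mathbf{Z}$ is a non-identically-zero polynomial, and $\mathcal{V}$, contained in its zero set, has Lebesgue measure zero.
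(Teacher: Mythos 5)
Your argument is correct, and its outer shell --- every $(n+m)$-minor of $\mathbf{Z}$ is a polynomial in $(\nu_0^\intercal~\cdots~\nu_{\ell-1}^\intercal)^\intercal$, so by Lemma \ref{lem: polynomial} it suffices to exhibit one minor that is not identically zero --- is exactly the paper's reduction. Where you genuinely diverge is in how non-triviality is established. The paper never constructs an input sequence: it keeps $\nu$ generic, selects $n+m$ columns of $\mathbf{Z}$ indexed by a basis $A^{s_k}B_{f_k(r)}$ extracted from the controllability matrix, and shows that the coefficient of one carefully chosen monomial in the resulting minor equals in absolute value the determinant of that basis, hence is nonzero; this handles an arbitrary $\mu_0$ automatically, since the $\mu_0$-dependent (constant-in-$\nu$) terms cannot contribute to that top-degree monomial. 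You instead produce an explicit witness $\nu^\star$: steer the mean to the origin (possible in $n$ steps by controllability), fire a unit impulse in each input channel followed by at least $n$ zero inputs, use Cayley--Hamilton inductively to absorb the cross-contamination from earlier impulses (and this is why the preliminary steering is genuinely needed when $\mu_0 \neq 0$ --- the terms $A^t\mu_0$ would otherwise not lie in the span of earlier zero-input columns), and then verify independence of the $n$ selected zero-input columns plus the $m$ impulse columns by column operations. This is more elementary and transparent; the only step you assert rather than check is the horizon bookkeeping, which does go through: your construction needs $n+m(n+1)$ time steps, and $n+m(n+1) \le \frac12 mn^2 + \frac12 mn + m + 1 < \ell$ for all $m,n\ge 1$. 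A side benefit of your route is that it shows generic full row rank is already achievable once $\ell \ge n+m(n+1)$, which is consistent with --- and sharper than --- the paper's own remark that its lower bound on $\ell$ is not tight.
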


\begin{proof}
Since $\textup{rank}(\mathbf{Z}) < n+m$ if and only if all $(n+m) \times (n+m)$ minors of $\mathbf{Z}$ are $0$, \[\mathcal{V} = \cap_{k = 1}^{C_{\ell}^{n+m}} \{(\nu_0^\intercal~ \cdots~ \nu_{\ell-1}^\intercal)^\intercal \in \mathds{R}^{m\ell}: [\mathbf{Z}]_k = 0\},\] where $\{[\mathbf{Z}]_k$, $1 \le k \le C_{\ell}^{n+m}\}$ contains all $(n+m)$-order minors of $\mathbf{Z}$. 
So it suffices to prove that, under the conditions of the lemma, there exists an $(n+m)$-order minor $[\mathbf{Z}]_k$ such that the set of input expectations $\{(\nu_0^\intercal~ \dots~ \nu_{\ell-1}^\intercal)^\intercal \in \mathds{R}^{m\ell}: [\mathbf{Z}]_k = 0\}$ is of Lebesgue measure zero.

From the definition of $\mathbf{Z}$, $(n+m)$-order minors of $\mathbf{Z}$ are polynomials of  $(\nu_0^\intercal~ \dots~ \nu_{\ell-1}^\intercal)^\intercal$ with coefficients being the entries of matrices $A^{\ell-1}B$, \dots, $B$, as well as $\mu_0$. 
Thus, if we can show that $[\mathbf{Z}]_k$ is not trivial (equal to zero almost surely), then its zero set is of Lebesgue measure zero by Lemma \ref{lem: polynomial}, which implies the conclusion.

It follows from the definition of controllability of $(A, B)$ that there exist $B_{i_1}$, $AB_{i_1}$, $\dots$, $A^{r_1}B_{i_1}$, $\dots$, $B_{i_p}$, $\dots$, $A^{r_p}B_{i_p}$, $1 \le i_1 < \cdots < i_p \le m$, $0 \le r_k \le n-1$ for $1 \le k \le p$, such that they form a basis of $\mathds{R}^n$, where $B_i$ is the $i$-th column of $B$. 
We sort these $n$ vectors according to the ascending order of the power of $A$: $A^{s_0}B_{f_{0}(1)}$, $\dots$, $A^{s_0} B_{f_{0}(q_{0})}$, $A^{s_1}B_{f_{1}(1)}$, $\dots$, $A^{s_1}B_{f_{1}(q_{1})}$, $\dots$, $A^{s_v}B_{f_{v}(q_{v})}$, where $0 = s_0 < s_1 < s_2 < \dots < s_v \le n-1$, $f_{k}(\cdot) \in \{1, \dots, m\}$ is strictly increasing functions, $1 \le q_k \le m$, $0 \le k \le v$, $v(\le n-1)$ is the total number of different power of $A$ appearing in the above basis, and $\sum_{k = 0}^v q_{k} = n$.
%&=\begin{bmatrix}
%    \sum\limits_{t=0}^{l-2} \sum\limits_{j=1}^m a_{1j}^{(t)} u_{l-2-t, j} & \sum\limits_{t=0}^{l-3} \sum\limits_{j=1}^m a_{1j}^{(t)} u_{l-3-t, j} & \dots & \sum\limits_{j=1}^m a_{1j}^{(0)} u_{0, j} & 0\\
%    \sum\limits_{t=0}^{l-2} \sum\limits_{j=1}^m a_{2j}^{(t)} u_{l-2-t, j} & \sum\limits_{t=0}^{l-3} \sum\limits_{j=1}^m a_{2j}^{(t)} u_{l-3-t, j} & \dots & \sum\limits_{j=1}^m a_{2j}^{(0)} u_{0, j} & 0\\
%    \vdots & \vdots & & \vdots & \vdots \\
%    \sum\limits_{t=0}^{l-2} \sum\limits_{j=1}^m a_{nj}^{(t)} u_{l-2-t, j} & \sum\limits_{t=0}^{l-3} \sum\limits_{j=1}^m a_{nj}^{(t)} u_{l-3-t, j} & \dots & \sum\limits_{j=1}^m a_{nj}^{(0)} u_{0, j} & 0\\
%    u_{l-1, 1} & u_{l-2, 1} & \cdots & u_{11} & u_{01} \\
%    \vdots & \vdots & & \vdots & \vdots \\
%    u_{l-1, m} & u_{l-2, m} & \cdots & u_{1m} & u_{0m}
%    \end{bmatrix}
For $1 \le d \le n+m = m + \sum_{k=0}^v q_k$, define
\begin{align*}
    h_d &= 
    \begin{cases}
    d, \qquad\quad\, \text{ for } 1 \le d \le m,\\
    m + r, \quad \text{ for } d = m + r, 1 \le r \le q_0,\\
    m+q_0+1+(r-1)(s_1+1),  \\
    \qquad\qquad \text{ for } d = m+q_0+r, 1 \le r \le q_1,\\
    \dots\\
    m+\sum_{p=0}^{k-1}(q_{p}(s_p+1))+1 + (r - 1)(s_k+1),  \\
    \qquad\qquad \text{ for } d = m+\sum_{p=0}^{k-1}q_p + r, 1 \le r \le q_k,\\
    \dots\\
    m+\sum_{p=0}^{v-1}(q_{p}(s_p+1)) + 1 + (r-1)(s_v+1),  \\
    \qquad\qquad \text{ for } d = m+\sum_{p=0}^{v-1}q_p+r, 1 \le r \le q_v,
    \end{cases}
\end{align*}
where $s_k$, $q_{k}$, and $v$ are defined above. 

Now we write the entries of $\mathbf{Z}$ explicitly in \eqref{Z} and select from left to right the $h_1$-th, $\dots$, $h_{n+m}$-th columns of \eqref{Z}.
\begin{figure*}[ht]
\begin{align}\label{Z}
    \mathbf{Z}
    =\begin{bmatrix}
    A^{\ell-1}\mu_0 + \sum\limits_{t=0}^{\ell-2} A^t B \nu_{\ell-2-t}, & A^{\ell-2} \mu_0 + \sum\limits_{t=0}^{\ell-3} A^t B \nu_{\ell-3-t}, & \cdots, & A \mu_0 + B \nu_0, & \mu_0\\
    \nu_{\ell-1}, & \nu_{\ell-2}, & \cdots, & \nu_1, & \nu_0
    \end{bmatrix}
\end{align}
\begin{align}\label{key_minor}
    &\Bigg|
    \begin{array}{cccc}
    A^{\ell-h_1}\mu_0 + \sum\limits_{t=0}^{\ell-h_1-1} A^t B \nu_{\ell-h_1-1-t},  & \cdots, & A^{\ell-h_{n+m}} \mu_0 + \sum\limits_{t=0}^{\ell-h_{n+m}-1} A^t B \nu_{\ell-h_{n+m}-1-t}\\
    \nu_{\ell-h_1}, &  \cdots, & \nu_{\ell-h_{n+m}}
    \end{array}\Bigg|
\end{align}
\begin{align}\label{key_minor_minor}
    &\left|
    \begin{array}{ccc}
    A^{\ell-h_{m+1}}\mu_0 + \sum\limits_{t=0}^{\ell-h_{m+1}-1} A^t B \nu_{\ell-h_{m+1}-1-t}, & \cdots, & A^{\ell-h_{n+m}} \mu_0 + \sum\limits_{t=0}^{\ell-h_{n+m}-1} A^t B \nu_{\ell-h_{n+m}-1-t}
    \end{array}  \right|
\end{align}
\begin{align}\label{key_minor_ab}
    \left|
    B_{f_{0}(1)},~\cdots,~B_{f_{0}(q_{0})},~A^{s_1}B_{f_{1}(1)},~\cdots,~A^{s_1}B_{f_{1}(q_{1})},~A^{s_2}B_{f_{2}(1)},~\cdots,~A^{s_2}B_{f_{2}(q_{2})},~\cdots,~A^{s_v}B_{f_{v}(q_{v})}
    \right| \not= 0,
\end{align}
\end{figure*}
This can be done because
\begin{align*}
    &\quad~ m+\sum_{p=0}^{v-1}(q_{p}(s_p+1)) + 1 + (q_v-1)(s_v+1)\\
    &\le m+\sum_{p=0}^{v-1}(m(s_p+1)) + 1 + (m-1)(s_v+1)\\
    &\le m+\sum_{p=0}^{n-2}(m(p+1)) + 1 + (m-1)((n-1)+1)\\
    &= \frac12 (mn^2+mn-2n+2m+2) < \ell.
\end{align*}
These $n+m$ columns define an $(n+m)$-order minor of $\mathbf{Z}$ as in \eqref{key_minor}. As said previously, \eqref{key_minor} is a polynomial of the entries of $(\nu_0^\intercal~ \dots~ \nu_{\ell-1}^\intercal)^\intercal$, and moreover it is non-zero almost everywhere. To see this, we analyze the coefficient of the term 
\begin{align}\label{term_u}
\prod_{d=1}^{m} \nu_{\ell-h_d, d} \prod_{d=m+1}^{m+n} \nu_{\ell-h_{d+1}, z_d},
\end{align}
where $h_{m+n+1} := m+\sum_{p=0}^{v}(q_{p}(s_p+1)) + 1$,
\begin{align*}
    z_d = 
    \begin{cases}
    f_0(r), & \text{ for } d = m+r, 1 \le r \le q_0,\\
    \dots\\
    f_k(r), & \text{ for } d = m+\sum_{p=0}^{k-1}q_p +r, 1 \le r \le q_k,\\
    \dots\\
    f_v(r), & \text{ for } d=m+\sum_{p=0}^{v-1}q_p+r, 1 \le r \le q_v,
    \end{cases}
\end{align*}
and $\nu_{i,j}$ is the $j$-th component of $\nu_i$. As above, $h_{m+n+1}$ is well defined because of the assumption for $\ell$: ${h_{m+n+1} = h_{n+m} + s_v + 1 \le h_{n+m} + n \le \ell}$. Now note for $1 \le d \le m$ that $\nu_{l-h_d}$ in \eqref{term_u} no longer appears at columns on its right side in \eqref{key_minor}, which can be observed from \eqref{Z}. 
So the absolute value of the coefficient of \eqref{term_u} in \eqref{key_minor} is determined by the upper right $n\times n$ determinant of \eqref{key_minor}, namely \eqref{key_minor_minor}.

From observing \eqref{Z}, it follows that $\nu_{\ell-h_{d}}$ only appears at the first $h_{d}$ columns of \eqref{Z}, and moreover only appears at the first $h_d-1$ columns of the first $n$ rows in \eqref{Z}, for $m+2 \le d \le m+n+1$. Hence, $\nu_{h_{d+1}, z_d}$ only appears once at the $(d-m)$-th column of \eqref{key_minor_minor}, $m+1 \le d \le m+n$. Also note that the difference of $h_{d+1}$ and $h_d$ for $m+1 \le d \le m+n$ is $h_{d+1} - h_d = s_k+1$ for $d = m+\sum_{p=0}^{k-1}q_p + r$, $1 \le d \le q_k$, and $0 \le k \le v$, so the corresponding term of $\nu_{\ell-h_{d+1}}$ in the summation at the $h_{d}$-th column of \eqref{Z} is $A^{s_k}B\nu_{\ell-h_{d}-1-s_k} = A^{s_k}B\nu_{\ell-h_{d+1}}$. Thus, the absolute value of the coefficient of \eqref{term_u} is identical to the determinant \eqref{key_minor_ab}, from the selection of $\nu$ in \eqref{term_u} and the fact that $A^tB\nu_{i} = \sum_{j=1}^m A^t B_j \nu_{i, j}$, where $B_j$ is the $j$-th column of $B$. Here the columns containing $A^{s_v}$ can be selected because the assumption for $l$ ensures that $h_{n+m+1} = h_{n+m} + s_v + 1 \le \ell$.
Therefore, we show that the polynomial of inputs \eqref{key_minor} is non-zero almost everywhere, and consequently the conclusion of the theorem holds.
\end{proof}

%\addtolength{\textheight}{-3.5cm}

\quad\emph{Proof of Theorem \ref{thm: as_Z_fullrank}:}
The conclusion follows from Lemma \ref{zeroset_Z_fullrank} and the fact that the probability density function of a non-degenerate Gaussian is absolutely continuous with respect to the Lebesgue measure of corresponding dimension.

\section*{APPENDIX B: The Proof of Theorem \ref{thm: as_D_fullrank}}
We write \eqref{eq:vecCorrelationDynamics_simp} as 
\begin{align*}
    X_{t+1} &= (\tilde{A} + \Sigma_A') X_t + (\tilde{B} + \Sigma_B') U_t \\
    &\quad + [K_{BA} W_t + K_{AB} W_t']\\
    &:= \breve{A} X_t + \breve{B} U_t + \eta_t\\
    &= \breve{A}^{t+1} X_0 + \sum_{k = 0}^t \breve{A}^k \breve{B} U_{t-k} + \sum_{k = 0}^t \breve{A}^k \eta_{t-k},
\end{align*}
so the conclusion follows from an argument essentially identical to that of Theorem \ref{thm: as_Z_fullrank} by noticing that $\eta_t$, $0 \le t \le \ell-1$, are fixed, and by considering $U_t$ as an input.

\section*{APPENDIX C: The Proof of Theorem \ref{thm: consistency}}
Consider each rollout $[(x_0^{(k)})^\intercal, \dots, (x_\ell^{(k)})^\intercal]^\intercal$ as an independent sample of the random vector $\mathbf{x}_{\ell} := [x_0^\intercal, \dots, x_\ell^\intercal]^\intercal$, and from Assumption \ref{asmp1} (ii) and (iii) we know that the random vector $\mathbf{x}_{\ell}$ has finite first and second moments.
So it follows from the Kolmogorov's strong law of large numbers that $\hat{\mathbf{Y}} \to \mathbf{Y}$ a.s., and similarly $\hat{\mathbf{Z}} \to \mathbf{Z}$ a.s., as $n_r \to \infty$. From the assumption that $\mathbf{Z} \mathbf{Z}^\intercal$ is invertible and the continuous mapping theorem (Theorem 2.3 in \cite{van2000asymptotic}), it can be obtained that as $n_r \to \infty$
\begin{align*}
    \hat{\mathbf{Y}} \hat{\mathbf{Z}}^\intercal (\hat{\mathbf{Z}} \hat{\mathbf{Z}}^\intercal)^{-1} \to \mathbf{Y} \mathbf{Z}^\intercal (\mathbf{Z} \mathbf{Z}^\intercal)^{-1}, \text{ a.s.}
\end{align*}
Here $(\mathbf{Z} \mathbf{Z}^\intercal)^{-1}$ exists because of assumption. Note that $\hat{\mathbf{Z}}$ is the average of trajectories, which depends on independent Gaussian inputs $u_t^{(k)}$. So From Lemma \ref{lem: polynomial}, $(\hat{\mathbf{Z}} \hat{\mathbf{Z}}^\intercal)^{-1}$ exists with probability one. If it does not exist, then we can define it to be the zero matrix.
Combining the above convergence with the Kolmogorov's strong law of large numbers, the convergence of $\hat{\mathbf{C}}$ and $\hat{\mathbf{D}}$ follows. Therefore, applying the continuous mapping theorem again, we obtain the consistency of the estimator $(\Sigma_A', \Sigma_B')$.

\bibliographystyle{IEEEtran}
\bibliography{IEEEabrv,bibliography.bib}

\end{document}